\newif\ifshort
\shortfalse

\documentclass[sigconf]{aamas}

\usepackage{balance} %

\usepackage{framed}
\usepackage{algorithm}
\usepackage[noend]{algpseudocode}
\usepackage[title]{appendix}
\usepackage{thm-restate}
\usepackage{pict2e}
\usepackage{colortbl}
\usepackage{xcolor}
\definecolor{lightblue}{RGB}{119,170,221} 
\definecolor{orange}{RGB}{238,136,102}
\definecolor{pear}{RGB}{187,204,51} 
\definecolor{lightcyan}{RGB}{153,221,255} 
\definecolor{lightyellow}{RGB}{238,221,136} 
\definecolor{palegrey}{RGB}{221,221,221}
\definecolor{pink}{RGB}{255,170,187}
\definecolor{olive}{RGB}{170,170,0}
\definecolor{mint}{RGB}{68,187,153}
\definecolor{darkgrey}{RGB}{80,80,80}  

\renewcommand\footnotetextcopyrightpermission[1]{\footnote{This manuscript is the full version of the paper accepted at the \emph{25th International Conference on Autonomous Agents and Multiagent Systems (AAMAS 2026).}}}

\title{Necessary President in Elections with Parties}

\author{Katarína Cechlárová}
\affiliation{
  \institution{P.J. \v Saf\'arik University}
  \city{Ko\v sice}
  \country{Slovakia}}
\email{katarina.cechlarova@upjs.sk}

\author{Ildikó Schlotter}
\affiliation{
  \institution{ELTE Centre for Economic and Regional Studies}
  \city{Budapest}
  \country{Hungary}}
\affiliation{
  \institution{Budapest University of Technology and Economics}
  \city{Budapest}
  \country{Hungary}}
\email{schlotter.ildiko@krtk.elte.hu}

\begin{abstract}
Consider an election where the set of candidates is partitioned into parties, and each party must choose exactly one candidate to nominate for the election held over all nominees. The \textsc{Necessary President} problem asks whether a candidate, if nominated, becomes the  winner of the election for all possible  nominations from other parties. 

We study the computational complexity of \textsc{Necessary President} for several voting rules. We show that while this problem is solvable in polynomial time for Borda, Maximin, and Copeland$^\alpha$ for every $\alpha\in [0,1]$, 
it is $\mathsf{coNP}$-complete for general classes of positional scoring rules that include $\ell$-Approval and $\ell$-Veto, even when the maximum size of a party is two. For such positional scoring rules, we show that \textsc{Necessary President} is
 $\mathsf{W}[2]$-hard when  parameterized by  the number of parties, but fixed-parameter tractable with respect to the number of voter types.
Additionally, we prove that \textsc{Necessary President}
for Ranked Pairs is $\mathsf{coNP}$-complete even for maximum party size two, and 
$\mathsf{W}[1]$-hard with respect to the number of parties; remarkably, both of these results hold even for constant number of voters.
\end{abstract}

\keywords{elections, parties, candidate nomination,  necessary president, computational complexity, fixed-parameter tractability}

\newcommand{\myref}[1]{%
\ifshort%
NP-Copeland%
\else%
\ref{#1}%
\fi%
}

\newcommand{\new}[1]{{#1}}
\def\opp{w}
\def\oppParty{P_w}

\algrenewcommand\algorithmicrequire{\textbf{Input:}}
\newenvironment{varalgorithm}[1]  {\algorithm\renewcommand{\thealgorithm}{#1}}
  {\endalgorithm}
  
\newcommand{\linkproof}[1]{%
\ifshort
    $\star$%
\else
    \hyperref[#1]{$\star$}%
\fi 
}

\newcommand{\opentriangle}{%
  \raisebox{0.2pt}{\makebox[0.77778em]{%
    \setlength{\unitlength}{0.6em}%
    \linethickness{0.4pt}\roundjoin
    \begin{picture}(1,1)
    \polygon(0,0)(1,0)(1,1)
    \end{picture}%
  }}%
}

\usepackage[textsize=small,textwidth=4cm,shadow]{todonotes}

\def\NP{\mathsf{NP}}
\def\poly{\mathsf{P}}
\def\coNP{\mathsf{coNP}}
\def\W1{\mathsf{W}[1]}
\def\W2{\mathsf{W}[2]}
\def\FPT{\mathsf{FPT}}
\def\paraNP{\textup{para-}\mathsf{NP}}
\def\XP{\mathsf{XP}}

\newtheorem{observation}{Observation}

\allowdisplaybreaks

\def\NPP{\textsc{Necessary President}}
\def\MM{\mathsf{MM}}

\newcommand{\scr}[1][\mathcal{E}]{\mathsf{scr}_{#1}}
\newcommand{\Cpl}[1][\mathcal{E}]{\mathsf{Cpl}_{#1}^{\alpha}}
\newcommand{\ora}[1]{\overrightarrow{#1}}
\newcommand{\ola}[1]{\overleftarrow{#1}}
\newcommand{\ol}[1]{\overline{#1}}
\def\mysim{\overset \E\sim}
\def\mysim{\sim}
\def\place{\mathsf{prt}}
\def\maxsize{s}

\newcommand{\shortdots}{\scalebox{0.8}[1]{$\!\cdots\!$}}

\def\E{\mathcal{E}}
\def\C{\mathcal{C}}
\def\P{\mathcal{P}}
\def\Q{\mathcal{Q}}
\def\S{\mathcal{S}}

\def\R{\mathcal{R}}

\begin{document}

\pagestyle{fancy}
\fancyhead{}

\maketitle

\section{Introduction}

Most political elections are preceded by a turbulent and intense period when parties 
want to decide which candidate to nominate for the election. 
The nomination process may take the form of primaries, or may involve more complex, strategic decisions that are not only based on the candidates' traits as viewed by the party members but also on the preferences of voters.
Indeed, as the election approaches, parties may realize that the candidate previously picked by the party---e.g., the winner of a primary---has a low support in the polls when compared to the nominees of other parties, and thus needs to be replaced.

Among recent examples of such strategic nominations, the most famous one is perhaps 
the replacement of Joe Biden by Kamala Harris before the 2024 US presidential election~\cite{2024USelectionswiki}.
In Poland, the largest 
opposition group Civic Platform  replaced her primary election winner  Ma\l gorzata Kidawa-B\l o\'nska  by  Rafa\l{} Trzaskowski after a significant drop in her support before  the  2020 presidential elections~\cite{tvn24-PolishCivicCoalition}.
Notably, in neither case was the substitute candidate able to ensure victory.

Sometimes the main purpose of parties is not victory itself, but rather to prevent a certain candidate from winning, leading to various forms of strategic nomination. Such a process was witnessed in the 2022 Hungarian elections, where opposition parties decided to cooperate and create a joint list of nominees in order to be able to defeat the ruling party Fidesz~\cite{2022Hungarianelectionswiki}.
A similar cooperation was carried out in the 2023 Turkish elections where the opposition formed a six-party alliance to nominate Kemal K{\i}l{\i}\c{c}daro\u{g}lu in the hope of defeating the ruling president Recep Tayyip Erdo\u{g}an  \cite{2023Turkishelectionlemonde}. 
\ifshort
\else
Both efforts were unsuccessful. 
\fi 

Several interesting questions arise in such situations. 
In this paper, we address one of the most basic ones: 
    Can a given nominee~$p$ participating in an election be defeated with a judicious choice of nominations from all remaining parties? Or will $p$ necessarily win, irrespective of the nominees chosen by all other parties?

To study this topic, we use the
formal model of candidate nomination  as introduced by Faliszewski et al.~\cite{faliszewski2016} 
where
parties are interpreted as sets of candidates, and each party has to nominate exactly one of its candidates for the upcoming election. %
Faliszewski et al.\ assumed 
that parties know the preferences of {\it all} voters over {\it all} potential candidates, 
and
studied two problems in this setting: The {\sc Possible President}  problem asks whether a given party can nominate one of its candidates in such a way that he or she becomes a winner of the election for \emph{some} nominations from other parties, while  the {\sc Necessary President} problem asks if a given nominee of the party will be a winner \emph{irrespective of all other nominations}.

Faliszewski et al. \cite{faliszewski2016} concentrated on Plurality. For this voting rule they  proved that the {\sc Possible President}  problem is $\NP$-complete and the  {\sc Necessary President} problem is $\coNP$-complete if voters' preferences are  unrestricted. Motivated by these hardness results, they focused on structured preferences and showed that {\sc Necessary President} admits a polynomial-time algorithm for single-peaked profiles. By contrast,  {\sc Possible President} remains $\NP$-complete even on 1D-Euclidean profiles but admits a polynomial-time algorithm if the elections are restricted to single-peaked profiles where the candidates of each party appear consecutively on the societal axis.

In contrast to the steadily growing research
concerning the classical and parameterized computational complexity of the {\sc Possible President} problem for a whole range of different voting rules~\cite{misra2019parameterized,lisowski2022strategic,cechlarova2023candidates,schlotter2024,schlotter2025candidate}, not much is known about the {\sc Necessary President} problem. 
 Cechlárová et al.~\cite{cechlarova2023candidates} generalized the intractability result of Faliszewski et al.~\cite{faliszewski2016} by
 showing that \NPP{} is $\coNP$-complete for Plurality with run-off, as well as for the voting rules $\ell$-Approval and  $\ell$-Veto for every positive integer $\ell$; all these results hold in the restricted setting where each party has at most two candidates.
 The authors also provided integer programs for {\sc Necessary President} for a wide range of voting rules and performed computational experiments based on real and synthetic election data.
 Complementing the tractability result by Faliszewski et al.~\cite{faliszewski2016} for \NPP{} for Plurality on single-peaked preference profiles, Misra~\cite{misra2019parameterized} showed that the same problem is polynomial-time solvable on single-crossing profiles.

\subsection{Our Contribution}

\begin{table}[t]
	\caption{Our results for \NPP{}.
    The column `Param./Const.' contains the considered parameters or their restriction to a constant; `eff.~comp.' stands for voting rules where the winners of an election can be computed efficiently, i.e., in polynomial time. For Copeland$^\alpha$, $\alpha$ is in $[0,1]$.}
	\label{tab:summary}
	\begin{tabular}{@{\hspace{2pt}}l@{\hspace{2pt}}c@{\hspace{2pt}}c@{\hspace{8pt}}c@{\hspace{2pt}}}\toprule
		Voting rule & Param./Const. & Complexity & Reference\\ \midrule
		Borda & -- & in $\poly$ 
        & Thm.~\ref{thm:Borda}\\
		Short & $s=2$ & $\coNP$-complete
        & Thm.~\ref{thm:short-coNP-s}\\
        & $t$ & $\mathsf{W}[2]$-hard, in $\XP$
        & Thm.~\ref{thm:short-W2hard-t}, Obs.~\ref{obs:bruteforce}\\
        & $\tau$ & in $\FPT$
        & Thm.~\ref{thm:FPT-short}\\
        Veto-like & $s=2$ & $\coNP$-complete
        & Thm.~\ref{thm:veto-coNP-s}\\
         & $t$ & $\mathsf{W}[2]$-hard, in $\XP$
        & Thm.~\ref{thm:Veto-W2hard-t}, Obs.~\ref{obs:bruteforce}\\
        & $\tau$ & in $\FPT$
        & Thm.~\ref{thm:FPT-veto}\\
        Copeland$^\alpha$ & -- & in $\poly$ 
        & Thm.~\ref{thm:Copeland}\\
		Maximin & -- & in $\poly$ 
        & Thm.~\ref{thm:Maximin}\\
		Ranked Pairs & $s=2$, $|V|=12$ & $\coNP$-complete
        & Thm.~\ref{thm:RPcoNP}\\
         & $t$, $|V|=20$ & $\mathsf{W}[1]$-hard, in $\XP$
        & Thm.~\ref{thm:RP_W1hard}, Obs.~\ref{obs:bruteforce}\\
        eff. comp. & $s,t$ & in $\FPT$ & Obs.~\ref{obs:bruteforce}\\
		\bottomrule
	\end{tabular}
\end{table}

We study the computational complexity of the \NPP{} problem for several positional scoring rules and a variety of Condorcet-consistent voting rules; our results are summarized in Table~\ref{tab:summary}.  
The positional scoring rules we consider include the Borda rule as well as the set of all so-called \emph{short} and \emph{Veto-like} voting rules, introduced by Schlotter et al.~\cite{schlotter2024}. 
Additionally, we consider some of the most well-known Condorcet-consistent voting rules: Copeland$^\alpha$ for all $\alpha \in [0,1]$, Maximin, and Ranked Pairs.

We classify the complexity of \NPP{} for each of these voting rules as either polynomial-time solvable or $\coNP$-complete. Furthermore, we apply the framework of parameterized complexity to deal with the computationally intractable cases: we examine how certain natural parameters of a given instance influence the computational complexity of the \NPP{} problem. 
The parameters we consider are the following: 
\begin{itemize}
    \item $t$: the number of parties;
    \item $s$: the maximum size of a party;
    \item $\tau$: the number of \emph{voter types}, where two voters are of the same type if they have the same preferences over the candidates;
    \item $|V|$: the number of voters.
\end{itemize}

For each of the voting rules for which \NPP{} turns out to be $\coNP$-complete, we settle its parameterized complexity for every possible combination of the above four parameters as either (i) fixed-parameter tractable (FPT), (ii) $\mathsf{W}[1]$- or $\mathsf{W}[2]$-hard and in $\XP$, or (iii) $\paraNP$-hard.

\subsection{Related Work}
\label{sec:relatedwork}
The line of research most closely related to this paper---the problem of strategic candidate nomination by parties preceding an election---was initiated by Faliszewski et al.~\cite{faliszewski2016} and, apart from the papers already mentioned~\cite{faliszewski2016,cechlarova2023candidates}, has focused on the \textsc{Possible President} problem; 
see Section~\ref{sec:pospres}.
Different models describing elections with parties from an algorithmic viewpoint are briefly discussed in Section~\ref{sec:electpart}.
Another related topic is \emph{strategic candidacy} where candidates are
independent and have full power over deciding whether to run for the election or not; 
see Section~\ref{sec:stratcand}.

Candidate nomination can also be seen as part of the broader topic of elections where the set of candidates is not fixed; such a scenario appeared already in the seminal paper by Bartholdi et al.~\cite{bartholdi1992} in the form of control by adding or deleting candidates by an election chair.  
For an overview of various other results dealing with this type of election control, see for example  Faliszewski and Rothe~\cite{Chapter7}, Chen et al.~\cite{chen2017elections} or Erd\'elyi et al.~\cite{erdelyi2021}.

\subsubsection{The \textsc{Possible President} Problem}
\label{sec:pospres}

The results of Faliszewski et al. \cite{faliszewski2016} for \textsc{Possible President} for Plurality voting have been extended to other voting rules.
Lisowski \cite{lisowski2022strategic} dealt with tournament solutions and showed that  {\sc Possible President} for Condorcet rule (which selects the Condorcet winner if it exists, and the empty set otherwise) can be solved in polynomial time but is $\NP$-complete for the Uncovered Set rule.
Cechlárová et al. \cite{cechlarova2023candidates} studied the problem for positional scoring voting rules, among them $\ell$-Approval, $\ell$-Veto, and Borda, and  Condorcet-consistent rules such as Copeland, Llull, and Maximin. In addition, this paper provides integer programs for the {\sc Possible President} as well as  {\sc Necessary President} problem for all the studied voting rules and computational experiments with these integer programs applied to real and synthetic elections.

The parameterized complexity of \textsc{Possible President} was first studied by Misra~\cite{misra2019parameterized} who
strengthened the results of Faliszewski et al. \cite{faliszewski2016} by showing that \textrm{Possible President} for Plurality is $\paraNP$-hard  when parameterized by the size of the largest party even in profiles that are both single-peaked and single-crossing; she also proved that when parameterized by  the number of parties,   the problem is $\mathsf{W}$[1]-hard and in $\XP$ for general preferences but becomes 
$\FPT$ on the 1D-Euclidean domain.

A  detailed multivariate complexity analysis of the \textsc{Possible President} problem for several classes of positional scoring rules has been provided by Schlotter et al.~\cite{schlotter2024} who %
studied the parameterized complexity of these problems with respect to the four parameters studied in this paper.
The same multivariate approach was taken by 
Schlotter and Cechlárová  \cite{schlotter2025candidate} for two types of Condorcet-consistent voting
rules: Copeland$^\alpha$ for every $\alpha\in [0,1]$ and Maximin.

\subsubsection{Elections Involving Parties}
\label{sec:electpart}
Harrenstein et al.~\cite{harrenstein2021hotelling} introduced a model where voters and candidates are both described by their position on 
the real line (modeling the political spectrum), and each party has to choose its nominee from among its potential candidates under the assumption that each voter votes for
the closest nominee. 
The authors showed that a Nash equilibrium (NE) %
is not guaranteed to exist even in a two-party game, and finding a NE %
is $\mathsf{NP}$-complete in general but can be computed in linear time for two %
parties. 

The above model was extended by Deligkas et al.~\cite{deligkas2022parameterized}: they associated each candidate with a cost and studied the  parameterized complexity of
the equilibrium computation problem under several natural parameters
such as
the number of different positions of the candidates, the so-called discrepancy and span of the nominees, and the maximum overlap of the parties.

Harrenstein and Turrini~\cite{harrenstein2022computing} considered district-based elections.  In each district, voters rank the nominated candidates and elect the Plurality winners, and parties have to strategically place their candidates in districts so as to maximize the number of their nominees that get elected.  The authors showed  that deciding the existence of pure NE  
for these games is $\mathsf{NP}$-complete if the party size is bounded by a constant and  $\Sigma^P_2$-complete in general.

Perek et al.~\cite{perek2013} introduced a model where %
voters, \emph{not} candidates, are partitioned into parties, with voters of the same party voting in the same way.
The authors proposed to measure the threat to the so-called leading party~$P$---the party whose favored candidate is the expected winner of the election---by the maximum number of voters who can abandon $P$ for another party without changing the winner of the election, and by the minimal number of voters that must leave~$P$ to ensure that the winner changes. Perek at al.~\cite{perek2013} and in a follow-up paper Guo et al.~\cite{guo_yang2015} studied the computational complexity of these problems for several different voting rules.

\subsubsection{Strategic Candidacy}
\label{sec:stratcand}

In strategic candidacy games, it is assumed that voters as well as candidates have preferences over possible election outcomes. Candidates  strategically choose whether to join or leave the election. Lang et al.~\cite{lang2025strategic} illustrate how this phenomenon occurs in reality in  the 2017 French presidential election where the  centrist candidate Bayrou withdrew to help Macron qualify to the second round (successfully), and green candidate Jadot withdrew to help the socialist candidate
Hamon qualify (not successfully). 
Similar situations are known also in the history of Slovak presidential elections: in February 2019, Robert Mistrík stepped down and supported Zuzana Čaputová~\cite{spectator-Mistrik-withdraws}, in March 2024 the parliamentary vice-chair Andrej Danko withdrew his presidential candidacy in favor of  ex-minister of justice \v Stefan Harabin~\cite{tars-Danko-withdraws}.

Dutta et al. \cite{dutta2001strategic}  consider a framework in which there is a finite set of voters and potential candidates, while  some  of the candidates may also be voters. Each individual,
including candidates, has preferences over the set of all candidates and it is assumed that every candidate prefers herself to all other candidates. The authors examine a two-stage procedure where in a first stage candidates decide on whether or not they will enter the election, and then in a second stage a voting
procedure is implemented to select the winner from the candidates 
who enter. A voting procedure is called  
{\it candidate stable} if it is always a Nash equilibrium (NE for short) for all candidates
to enter.  The authors show that if the sets of
voters and candidates are disjoint, then the only candidate stable voting procedures  are dictatorial procedures. When the set of candidates and voters overlap, then there exist non-dictatorial voting procedures that satisfy candidate stability and unanimity.

Brill and Conitzer~\cite{brill2015strategic}  extend the analysis
to the case when also  the voters act strategically.
First, they study the strategic candidacy with single-peaked
preferences when the voting rule is majority consistent but not Condorcet-consistent. They define several stability notions and show that when candidates play a strong
equilibrium and voters vote truthfully, the outcome will be the
Condorcet winner. For voting by successive elimination they show that computing the candidate stable set is NP-complete.

Lang et al.~\cite{lang2025strategic}
present an analysis of such games for a list of common voting procedures, among them positional scoring rules (including plurality, veto and Borda), Condorcet consistent voting rules (Maximin, uncovered set and Copeland), plurality with run-off and single transferable vote. The studied question  is  whether such games possess a pure strategy NE in which
the outcome is the same as if all candidates run (called {\it genuine equilibra}). A number of negative results is obtained: unless the number of candidates is small there may be games without such stable outcomes, but for the Copeland rule
 the existence of a genuine equilibrium is guaranteed for any number of candidates and for an odd number of voters. Also, 
a strong relationship between equilibra of candidacy games and a form of voting control by adding or removing candidates, where candidates must consent to addition
or deletion, was established.

Other works assume that the preferences of candidates depend not only on the outcome of the election, but take into account also the monetary and reputational costs of running an electoral
campaign.   Obraztsova et al.~\cite{Obraztsova2015} assume that candidates are lazy, meaning that their  additional value from running an electoral campaign is negative (the campaign incurs some cost) while Lang et al.~\cite{Lang2019Keen} assume that this value is  positive (participating in the election gives the candidate an opportunity to advertise his party or political platform and thus raises his profile/reputation). 

Further line of research in Polukarov et al.~\cite{Polukarov2015} and Obraztsova et al.~\cite{Obraztsova2015} studies equilibrium dynamics in candidacy games, where candidates may strategically decide to enter the election or withdraw their candidacy in each iteration of the process until (and if) it converges to a stable state.

\section{Preliminaries}
\label{sec:prelim}
We use the notation $[i]=\{1,2,\dots,i\}$ for each positive integer $i$.

An election $\E=(C,V,\{\succ_v\}_{v \in V})$ consists of a finite set~$C$ of candidates, a finite set~$V$ of voters, and the preferences of voters over the set~$C$ of candidates. %
We assume that the preferences of voter~$v$ are represented by a strict linear order~$\succ_v$  over~$C$ where $c\succ_v c'$ means that voter $v$ \emph{prefers} candidate~$c$ to candidate~$c'$. If two voters have the same preferences, they are said to be of the same \emph{type}; the number of voter types in $V$ will be denoted by $\tau$.

We denote the set of all elections over a set~$C$ of candidates by~${\mathbb{E}}_C$.
A \emph{voting rule} $\R:{\mathbb E}_C\to 2^C$ chooses a set of \emph{winners} of the election. 

We shall also assume that a partition ${\P}=\{P_1,\dots,P_t\}$ of the set~$C$ of candidates  is given; each set $P_j$ is interpreted as a \emph{party} that has to decide about whom among its potential candidates to nominate for the election.
Formally, a \emph{reduced election}  arises after each party has nominated a unique candidate, leading to a set $C'\subseteq C$ of nominees such that $|C'\cap P_j|=1$ for each ${j\in [t]}$. In the reduced election ${\E}_{C'}=(C',V,\{\succ'_v\}_{v \in V})$ 
each voter~$v \in V$ restricts her original preference relation~$\succ_v$ over~$C$ to~$C'$, yielding~$\succ'_v$.

Now we formulate the problem studied in this paper.

\medskip
\noindent
\begin{minipage}{\columnwidth}
\begin{framed}

\noindent {\bf Problem }{\sc Necessary President} for voting rule $\R$.

\noindent {\bf Instance:} A tuple $I=(\E,\mathcal{P},p)$ where $\E=(C,V,\{\succ_v\}_{v \in V})$ is an election with candidate set~$C$ and voter set~$V$,  a partition~${\mathcal P}$ of~$C$ into parties, and a \emph{distinguished candidate} $p\in C$. 

\noindent {\bf Question:}
Is $p$ a \emph{necessary president}, that is, 
is it true that for all 
possible nominations from parties not containing $p$,  leading to a set~$C'$ of nominees with $p \in C'$, the distinguished candidate~$p$ is a winner of the reduced election~${\E}_{C'}$ over~$C'$?
\end{framed}

\end{minipage} 
\medskip

Notice that we consider the \emph{non-unique winner model}, so we define $p$ to be a necessary president if it is {\it  among the winners} in all possible reduced elections that contain~$p$. 
While the party containing the distinguished candidate~$p$ may contain additional candidates, those are irrelevant in the context of \NPP{}.

\begin{example}
Consider an election with three parties, $A=\{a_1,a_2\}$, $B=\{b_1,b_2\}$, and~$P=\{p\}$.
Assume that each party has to nominate exactly one candidate for an upcoming election that uses the Borda voting rule.
Let the set of voters be $\{v_1,v_2,v_3\}$ with preferences as follows.
\[
\begin{tabular}{llllll}
$v_1:$ 
	&  $p,$ 
	&  $a_1,$ 
	&  $b_1,$ 
	&  $a_2,$ 
	&  $b_2$ 
\\
$v_2:$ 
	&  $a_1,$ 
	&  $p,$ 
	&  $b_1,$ 
	&  $a_2,$ 
	&  $b_2$ 
\\
$v_3:$ 
	&  $b_1,$ 
	&  $b_2,$
	&  $a_2,$  
	&  $p,$ 
	&  $a_1$ 
\end{tabular}
\]
Note that $p$ obtains 2 points from voter~$v_1$ and 2 additional points from~$v_2$ and~$v_3$, irrespective of whether party~$A$ nominates~$a_1$ or~$a_2$: in the former case, $p$ obtains $1$ point from each of these voters, while in the latter case it obtains~$2$ points from~$v_2$  but none from~$v_3$. Hence, $p$'s score in the reduced election will be~$4$ regardless of the nominations from other parties.  
Then it is clear that $p$ is a necessary president, because no other candidate is able to obtain more than~$4$~points, since that would require some candidate of party~$A$ or~$B$ to obtain 2 points from two different voters (namely, from both~$v_2$ and~$v_3$) which is not possible.
\hfill$\lrcorner$
\end{example} 

In order to verify that a given candidate~$p$ is \emph{not} a necessary president, it suffices to present a reduced election containing~$p$ in which $p$ is not a winner. Thus, we have the following fact.
\begin{observation}
\label{obs:incoNP}
\NPP{} is in $\coNP$ for each voting rule where winner determination can be done in polynomial time.    
\end{observation}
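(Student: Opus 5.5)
To prove the statement, I would show that the complement problem, deciding that $p$ is \emph{not} a necessary president, lies in $\NP$. Membership of \NPP{} in $\coNP$ then follows by definition. Let $I=(\E,\P,p)$ be an instance with $\P=\{P_1,\dots,P_t\}$, and assume without loss of generality that $p\in P_t$.

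The certificate for a no-instance is a nomination. Concretely, it is a set $C'\subseteq C$ with $|C'\cap P_j|=1$ for each $j\in[t]$ and $C'\cap P_t=\{p\}$. Equivalently, it is a choice of one candidate from each of $P_1,\dots,P_{t-1}$. Its size is at most $|C|$, so it is polynomial in the input length.

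The verifier performs three checks.
\begin{enumerate}
\item It checks that $C'$ is a valid nomination. This takes linear time by scanning the partition.
\item It builds the reduced election $\E_{C'}$ by restricting each $\succ_v$ to $C'$. This takes time $O(|V|\cdot|C|)$.
\item It runs the assumed polynomial-time winner-determination procedure for $\R$ on $\E_{C'}$ and accepts exactly when $p\notin\R(\E_{C'})$.
\end{enumerate}
Since $\E_{C'}$ is no larger than $\E$, the overall running time is polynomial in $|I|$.

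For correctness, I argue both directions.
\begin{itemize}
\item If $p$ is not a necessary president, then by definition some nomination $C'\ni p$ has $p\notin\R(\E_{C'})$. That $C'$ is an accepting certificate.
\item Conversely, any accepted certificate is a valid nomination containing $p$ in which $p$ is not a winner. This witnesses that $I$ is a no-instance.
\end{itemize}

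No step here is a real obstacle. The only point needing care is that the winner-determination procedure must run in time polynomial in the reduced election, whose size is bounded by the original instance. Note that I only claim membership in $\coNP$ here, not completeness.
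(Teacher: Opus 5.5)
Your proposal is correct and takes the same approach as the paper: the certificate for a no-instance is a nomination (reduced election) containing $p$ in which $p$ is not a winner, and it is checked with the polynomial-time winner-determination procedure. You simply spell out the verifier and both correctness directions in more detail than the paper's one-line justification.
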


As observed for \textsc{Possible President} by, e.g., Schlotter et al.~\citep{schlotter2024}, 
there are at most $s^t$ possible nominations by the parties where $s$ and $t$ are the maximum size and the number of parties, respectively.
Thus, a simple brute force approach yields the following:

\begin{observation}
\label{obs:bruteforce}
\NPP{} is in $\XP$ when parameterized by the number $t$ of parties and in~$\FPT$ when parameterized by both $t$ and the maximum size~$s$ of a party for each voting rule where winner determination can be done in polynomial time.    
\end{observation}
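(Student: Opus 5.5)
The algorithm is plain enumeration: list every possible nomination and run winner determination on each resulting reduced election. The plan is as follows. Let $I=(\E,\P,p)$ be an instance and let $P_p\in\P$ be the party containing~$p$. A nomination relevant to \NPP{} fixes $p$ for~$P_p$ and one candidate $c_j\in P_j$ for every other party $P_j\neq P_p$. The number of such nominations is $\prod_{P_j\neq P_p}|P_j|\le s^{t-1}\le s^t$. We enumerate these choices in lexicographic order. Each one yields a set $C'$ of nominees with $|C'|=t$ and $p\in C'$.

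For each such $C'$ the algorithm does the following:
\begin{enumerate}
\item build the reduced election $\E_{C'}$ by restricting every preference order $\succ_v$ to~$C'$, which takes time polynomial in $|C|\cdot|V|$;
\item compute $\R(\E_{C'})$ with the assumed polynomial-time winner determination procedure;
\item if $p\notin\R(\E_{C'})$, stop and answer ``no''.
\end{enumerate}
If every nomination passes, answer ``yes''. Correctness follows directly from the definition of a necessary president: $p$ is one exactly when no reduced election containing~$p$ has $p$ outside its winner set, and the enumeration covers every such reduced election.

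The total running time is $s^{t}\cdot \mathrm{poly}(|I|)$. Since $s\le |C|\le |I|$, this is at most $|I|^{t}\cdot\mathrm{poly}(|I|)$, i.e.\ $|I|^{O(t)}$, so the problem is in $\XP$ when parameterized by~$t$. When parameterized by $s$ and $t$ together, the factor $s^t$ depends only on the parameters, so the bound has the form $f(s,t)\cdot\mathrm{poly}(|I|)$, which is the definition of $\FPT$.

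There is no real obstacle. The only points to check are that the input length bounds~$s$, which makes $s^t$ an $\XP$ bound, and that constructing each reduced election takes polynomial time.
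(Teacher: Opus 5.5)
Your proof is correct and follows the paper's approach: the paper likewise bounds the number of possible nominations by $s^t$ and checks each reduced election with the polynomial-time winner determination procedure. You also spell out that $s \le |I|$, which gives the $\XP$ bound; the paper leaves that step implicit.
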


\ifshort
We assume familiarity with the framework of parameterized complexity; see 
the books~\cite{cygan2015parameterized,downey-fellows-FPC-book} %
for an introduction.
\fi 

\subsection{Voting Rules}
In this paper we shall deal with two classes of voting rules, {\it positional scoring rules} and {\it Condorcet-consistent rules}. 
For all considered voting rules the winners can be computed efficiently (that is, in polynomial time) for any election, so by Observation~\ref{obs:incoNP} we know that \NPP{} is in $\coNP$ for all the voting rules studied in this paper.\footnote{We remark that in the case of Ranked Pairs, efficient winner determination assumes some tie-breaking method.}

\subsubsection{Positional Scoring Rules}
 A positional scoring rule for elections involving $t$ candidates is associated with a \emph{scoring vector} $(a_1,a_2,\dots,a_t)$ where $a_1\ge a_2\ge \cdots \ge a_t$ and at least one inequality is strict. For each candidate~$c$, the rule assigns $a_i$ points to~$c$ for each voter that ranks $c$ on the $i^{\mathrm{th}}$ position of her preference list. The \emph{winners} of the election are the candidates with the highest \emph{score}, that is, the total number of points obtained.  We write $\scr(c)$ for the score of  candidate~$c$ in an election~$\E$.

We deal with the following (classes of) positional scoring rules.

 {\it  Short scoring rules}, introduced by Schlotter et al. in~\cite{schlotter2024}, are defined by scoring vectors with only a constant number of non-zero positions, i.e., having the form  $(a_1,a_2,\dots,a_\ell, 0, \dots, 0)$ for some constant~$\ell$. In other words, voters in such elections only allocate points to their $\ell$ most preferred candidates. This class of voting rules contains the well-known scoring rule $\ell$-\emph{Approval} for fixed~$\ell$,  corresponding to the scoring vector with ones in their first $\ell$ positions and zeros afterwards. The case $\ell=1$ is \emph{Plurality} where voters only allocate a single point to their most preferred candidate. 

\emph{Veto-like scoring rules} have scoring vectors that contain some value~$a$ on every position except for the last $\ell$ positions  for some constant~$\ell$, 
i.e., they have the form 
 $(a,\dots,a,a_1,a_2,\dots,a_\ell)$ for some constant $\ell \geq 1$ and $a>a_1$. In other words, voters in such elections distinguish only their $\ell$ least favored candidates.
Veto-like scoring rules include \emph{$\ell$-Veto}, whose scoring vector is $(1,1,\dots,1,0,0,\dots,0)$ with exactly $\ell$ zeros; the case $\ell=1$ is called \emph{Veto}.

Finally, we also consider the \emph{Borda voting rule} which is described by the scoring vector $(t-1,t-2,\dots,1,0)$. Thus, the number of points that a candidate~$c$ receives from a voter~$v \in V$ is the number of candidates ranked worse than~$c$ in the preference list of~$v$.

\subsubsection{Condorcet-Consistent Rules}
 For two candidates $c,c'\in C$, let
  $N_\E(c,c')$ denote the number of voters who prefer candidate~$c$ to candidate~$c'$ in election~$\E$; we shall omit the subscript when $\E$ is clear from the context. If $N_\E(c,c')>N_\E(c',c)$ we say that candidate~$c$ {\it defeats} candidate~$c'$ in~$\E$; if  $N_\E(c,c')=N_\E(c',c)$ and $c \ne c'$, then candidates~$c$ and~$c'$ are {\it tied} in $\E$. 
 The \emph{Condorcet winner} %
 is a candidate that defeats all other candidates; %
 a voting rule is \emph{Condorcet-consistent} if it always selects the Condorcet winner whenever it exists.

The \emph{Copeland$^\alpha$ voting rule} was defined by Faliszewski et al.~\cite{faliszewski2009llull} for some constant $\alpha\in[0,1]$. This voting rule takes all pairs $(c,c')$ of distinct candidates. Considering their head-to-head comparisons, it allocates $1$ point to the candidate defeating the other and allocates $0$ points to the defeated one; being tied earns $\alpha$ points to both candidates.
Formally, %
the score received by~$c$ on the basis of the head-to-head comparison of~$c$ with~$c'$ in $\E$ is
\begin{equation}
\label{eq:define_Copeland}    
\Cpl(c,c')=\left\{ 
\begin{array}{ll}
1  & \text{if $c$ defeats $c'$ in $\E$;} \\
\alpha  & \text{if $c$ and $c'$ are tied in $\E$;} \\
0  & \text{if $c$ is defeated by $c'$ in $\E$.} \\
\end{array}
\right.
\end{equation} 
Then the Copeland$^\alpha$ score of candidate $c$  is computed as the sum
$\Cpl(c)=\sum_{c'\in C\setminus\{c\}}\Cpl(c,c')$.
The winners of~$\E$ are all candidates with the maximum score. Copeland$^\alpha$ for $\alpha=1$ is called the \emph{Llull} rule, and 
we refer to the case $\alpha=0$ as the \emph{Copeland} rule.\footnote{Notice that some papers, in particular \cite{faliszewski2009llull}, use the term Copeland rule for Copeland$^{0.5}$.}

In the \emph{Maximin voting rule}, the Maximin score of candidate~$c$ in election~$\E$ 
is $\MM_\E(c)=\min_{c'\in C \setminus \{ c\}} N_\E(c,c')$.
In other words, the Maximin score of a candidate~$c$ is the largest integer~$r$ such that for every other candidate~$c'$, there exist $r$ voters who prefer~$c$ to~$c'$.
Again, the winners  of~$\E$ are the candidates with maximum score.

The \emph{Ranked Pairs voting rule} uses the so-called majority graph of the election: a directed graph~$D_\E=(C,A)$ where vertices are candidates and  $(c,c')$ is an arc for two distinct candidates~$c$ and~$c'$ if and only if $c$ defeats $c'$ in pairwise comparison in the election~$\E$. 
The winner determination process for Ranked Pairs builds an acyclic\footnote{The arc set~$F$ is \emph{acyclic} if there is no directed cycle in the spanned digraph $(C,F)$; see the textbook by Diestel~\cite{diestel-book} for basic graph terminology.} arc set by considering the set of candidate pairs in~$A$, examined in non-increasing order of their \emph{weight}, %
where the weight of $(c,c') \in A$ is $N_\E(c,c')$.
Starting from an empty arc set~$F$, this process checks whether the currently examined arc can be added to~$F$ without creating any cycles, and if so, adds it to~$F$; the process then proceeds with the next arc. For simplicity, we only consider arcs present in the majority graph~$D_\E$ and, hence, no arcs between tied candidates. Nonetheless, this process usually necessitates some tie-breaking which determines the ordering of arcs with the same weight. The winners are all candidates with no incoming arcs in~$F$.
We remark that our results are not dependent on any particular tie-breaking method, but hold for arbitrary tie-breaking methods.

\subsection{Parameterized Complexity}

A parameterized problem~$Q$ associates with each input instance~$I$ an integer parameter~$k$ that allows us to measure the running time of an algorithm solving~$Q$ as a function of not only the input length~$|I|$ but also the parameter.
We say that $Q$ is \emph{fixed-parameter tractable} (FPT) with parameter~$k$ if it admits an algorithm with running time~$f(k)|I|^{O(1)}$ for some computable function~$f$;
such an algorithm is called an \emph{FPT algorithm} with parameter~$k$.

By contrast, an algorithm that has running time $|I|^{f(k)}$ for some computable function~$f$
is called an \emph{$\mathsf{XP}$ algorithm} with parameter~$k$; parameterized problems admitting such an algorithm are said to be contained in the complexity class $\XP$.
Showing---via a parameterized reduction---that a parameterized problem~$Q$ is $\mathsf{W}[1]$-hard (or even $\mathsf{W}[2]$-hard) implies that $Q$ is not fixed-parameter tractable with the given parameter, unless the standard complexity-theoretic assumption  $\mathsf{FPT} \subsetneq W[1] \subsetneq W[2]$ fails. An even stronger indication of intractability is when $Q$ is $\NP$-hard already for some constant value of the parameter; in this case, $Q$ is called \emph{para-$\NP$-hard}. 

For more background on parameterized complexity, see
the books~\cite{cygan2015parameterized,downey-fellows-FPC-book}.

\section{Results for Positional Scoring Rules}

In this section we present our results on the parameterized complexity of the \NPP{} problem for three types of positional scoring rules. In Section~\ref{sec:borda} we deal with the Borda rule, while in Section~\ref{sec:short+veto} we study short and Veto-like rules.
\subsection{The Borda Rule}
\label{sec:borda}

We start by showing that \NPP{} for the Borda voting rule is polynomial-time solvable. This tractability result is somewhat surprising in view of the fact that the closely related \textsc{Possible President} problem is computationally hard for Borda~\cite{schlotter2024}.
\begin{theorem}
\label{thm:Borda} {\sc Necessary  President} for Borda is polynomial-time solvable.
\end{theorem}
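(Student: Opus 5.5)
Our approach writes Borda scores in reduced elections as sums of pairwise comparisons, which decouples the parties. In a reduced election $\E_{C'}$, the Borda score of a nominee $c$ is $\sum_{c' \in C'\setminus\{c\}} N(c,c')$. Here $N(c,c')$ is computed in the original election $\E$, since restricting preferences does not change pairwise comparisons. So $p$ fails to be a winner iff there exist a party $Q\neq P_p$ (where $P_p$ is the party of $p$), a nominee $w\in Q$, and nominees $c_j\in P_j$ for the remaining parties, such that
\[
\sum_{c'\in C'\setminus\{w\}} N(w,c') > \sum_{c'\in C'\setminus\{p\}} N(p,c').
\]
Because $N(w,c')+N(c',w)=|V|$, we rewrite the difference of the two sides as $N(w,p)-N(p,w)$ plus, for each other party $P_j$, the term $N(w,c_j)-N(p,c_j)$. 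The key point is that this difference is a sum of terms, each depending on the choice from only one party.

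Given $w$, the adversary can therefore optimize each remaining party independently. Let $\Delta(w) = N(w,p)-N(p,w) + \sum_{j} \max_{c\in P_j} \bigl(N(w,c)-N(p,c)\bigr)$, where the sum ranges over all parties other than $P_p$ and the party of $w$. Then $p$ is a necessary president iff $\Delta(w)\le 0$ for every candidate $w\notin P_p$.

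The algorithm follows directly. Compute all values $N(\cdot,\cdot)$ in $O(|V||C|^2)$ time. For each candidate $w\notin P_p$, compute $\Delta(w)$ in $O(|C|)$ time, and reject iff some $\Delta(w)>0$. Correctness has two directions. If $\Delta(w)>0$, then nominating $w$ together with the maximizers in every other party gives a reduced election in which $w$ strictly beats $p$. Conversely, any reduced election in which $p$ loses contains a nominee $w$ with a higher score, and its score difference is at most $\Delta(w)$.

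The only delicate step is the decomposition identity. Borda's dependence on the number of nominated candidates is harmless because every reduced election contains exactly $t$ nominees. Concretely, $w$'s score counts, over all voters, the nominees ranked below $w$, and this count equals a sum over nominees $c'$ of the number of voters placing $c'$ below $w$. Once we verify that the pairwise term between $w$ and $p$ is counted correctly on both sides (as $N(w,p)$ and $N(p,w)$) and is not double-counted, the rest is routine.
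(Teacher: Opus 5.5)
Your proposal is correct and is essentially the paper's proof. For each guessed $w$, the paper greedily picks, in every other party, a candidate maximizing $\Delta(c)=|\{v: w\succ_v c\succ_v p\}|-|\{v: p\succ_v c\succ_v w\}|$, which is exactly your $N(w,c)-N(p,c)$, and it adds the same $N(w,p)-N(p,w)$ term; the only difference is that you obtain the decomposition from the pairwise-sum form of the Borda score, while the paper counts, voter by voter, the nominees ranked between $p$ and $w$.
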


\begin{proof}
We propose  a polynomial-time algorithm that solves \NPP\ for Borda voting; see Algorithm~\ref{alg:Borda} for a pseudocode. Let $I=(\E,\P,p)$ be our input instance. 

Assume that there exists a set~$C' \ni p$ of nominated candidates for which 
$p$ is \emph{not} a winner in the reduced election~$\E_{C'}$ over~$C'$. 
Let $P$ be the party containing~$p$.
Algorithm~\ref{alg:Borda} first guesses
some candidate~$\opp \in \oppParty \in \P \setminus \{P\}$ 
whose score in~$\E_{C'}$ exceeds the score of~$p$, and then greedily nominates a candidate from every remaining party. Note that by ``guessing''%
~$\opp$ we mean iterating over all possibilities for choosing it. 

Algorithm~\ref{alg:Borda} proceeds by computing the following value for each candidate~$c \in C \setminus (P \cup \oppParty )$:
\begin{equation}
\label{def:delta-c}
    \Delta(c)=|\{v:v \in V, \opp \succ_v c \succ_v p\}|-|\{v:v \in V, p \succ_v c \succ_v \opp\}|.
\end{equation}
Notice that %
the quantity $\Delta(c)$ captures the excess of the score of~$\opp$ over the score of~$p$ that results from nominating candidate~$c$.%

Next, for each party~$\widetilde{P} \in \P \setminus \{P,\oppParty \}$, Algorithm~\ref{alg:Borda} nominates a candidate~$c_{\widetilde{P}} \in \widetilde{P}$  maximizing~$\Delta(c_{\widetilde{P}})$ and checks whether %
$p$ is not a winner in the resulting election; if so, it outputs ``no.'' 
If the algorithm has explored all possible guesses for~$\opp$ but has not returned ``no'', then it returns ``yes.''

\begin{varalgorithm}{NP-Borda}
\caption{Solving \NPP{} for Borda.}
\label{alg:Borda}
\begin{algorithmic}[1]
\Require{An instance~$(\E,\P,p)$ of \NPP{} with candidate set~$C$ and $p \in P \in \P$.}
\ForAll{%
$\opp \in C \setminus P$}
    \State Let $\oppParty $ be the party in~$\P$ containing~$\opp$.
    \ForAll{$c' \in C \setminus (P \cup \oppParty )$}\label{line:pickpp}
	   \State Compute $\Delta(c)$ as in (\ref{def:delta-c}).
	\EndFor
    \State Set $\widetilde{C}=\{p,\opp\}$.\label{line:initC}
    \ForAll{$\widetilde{P} \in \P \setminus \{P,\oppParty \}$}
        \State Add some candidate $c_{\widetilde{P}}\in \arg\max_{c \in \widetilde{P}} \Delta(c)$ to~$\widetilde{C}$. \label{line:addtoC}
    \EndFor
    \If{$p$ is not a winner in~$\E_{\widetilde{C}}$}
     {\bf return} ``no''. \label{line:returnyes}
     \EndIf
\EndFor
\State {\bf return} ``yes''. \end{algorithmic}
\end{varalgorithm}

Let us prove the correctness of Algorithm~\ref{alg:Borda}.
It is clear that whenever Algorithm~\ref{alg:Borda} returns ``no'', then it does so correctly, because for a set~$\widetilde{C}$ of nominated candidates (containing exactly one candidate from each party in $\P$), candidate~$p$ nominated by party~$P$ is \emph{not} a winner in the reduced election~$\E_{\widetilde{C}}$ over~$\widetilde{C}$. 

Hence, it remains to prove that whenever the input is a ``no''-instance of \NPP, Algorithm~\ref{alg:Borda} returns ``no.'' Let $C'$ be the set of nominees in some reduced election~$\E_{C'}$ where $p$ is a nominee but not a winner. 
Then there exists some candidate~
$\opp\in \oppParty \in \P \setminus \{P\}$
with $\scr[\E_{C'}](\opp)>\scr[\E_{C'}](p)$.
Let  us denote by $V^{p \succ \opp}$ the set of those voters who prefer $p$ to~$\opp$, and let  $V^{\opp \succ p}=V \setminus V^{p \succ \opp}$ denote the rest of the voters.

Our key observation is the following: if for some voter~$v \in V^{\opp \succ p}$ there are exactly $i$ nominees in~$\E_{C'}$ that $v$ prefers to~$p$ but not to~$\opp$,  then the Borda rule allocates $i+1$ points more to~$\opp$ than to~$p$ due to voter~$v$ in~$\E_{C'}$. Similarly, if for some voter~$v \in V^{p \succ \opp}$ there are exactly~$i$ nominees in~$\E_{C'}$ that $v$ prefers to~$\opp$ but not to~$p$,  then Borda allocates $i+1$ points more to~$p$ than to~$\opp$ due to voter~$v$. 
Therefore, we get that 
\begin{align*}
    0 & <  \scr[\E_{C'}](\opp) -\scr[\E_{C'}](p) = \\
    &= \sum_{v \in V^{\opp \succ p}} 
    \bigg(\,\bigg| \! \! \bigcup_{\substack{c \in C',\\ \opp \succ_v c \succ_v p}} \!\!\! \{c\}\bigg|+1\bigg)  -
    \sum_{v \in V^{p \succ \opp}} 
    \bigg(\,\bigg| \! \! \bigcup_{\substack{c \in C',\\ p \succ_v c \succ_v \opp}} \!\!\! \{c\}\bigg|+1\bigg) \\
    &= \sum_{c \in C' \setminus \{p,\opp\}} \! \Delta(c) 
    +|\{v:v \in V,\opp \succ_v p\}|-
    |\{v:v \in V,p \succ_v \opp\}|
    \\
    & \leq \sum_{c \in \widetilde{C} \setminus \{p,\opp\}} \! \Delta(c) 
    +|\{v:v \in V,\opp \succ_v p\}|-
    |\{v:v \in V,p \succ_v \opp\}| \\
    &= \sum_{v \in V^{\opp \succ p}}
    \bigg(\,\bigg| \! \! \bigcup_{\substack{c \in \widetilde{C},\\ \opp \succ_v c \succ_v p}} \!\!\! \{c\}\bigg|+1\bigg)  -
    \sum_{v \in V^{p \succ \opp}}
    \bigg(\,\bigg| \! \! \bigcup_{\substack{c \in \widetilde{C},\\ p \succ_v c \succ_v \opp}} \!\!\! \{c\}\bigg|+1\bigg)
    \\
    &= \scr[\E_{\widetilde{C}}](\opp)-\scr[\E_{\widetilde{C}}](p).
\end{align*}
where $\widetilde{C}$ is the set computed by Algorithm~\ref{alg:Borda} on lines~\ref{line:initC}--\ref{line:addtoC} during the iteration where %
$\opp$ is picked on line~\ref{line:pickpp}.
Note that the inequality follows from the fact that each nominee~$c \in \widetilde{C}$ maximizes~$\Delta(c)$ among all candidates within its own party, as ensured by line~\ref{line:addtoC}. It follows that Algorithm~\ref{alg:Borda} will find on line~\ref{line:returnyes} that $p$ is not a winner of the election~$\E_{\widetilde{C}}$ because $\scr[\E_{\widetilde{C}}](\opp)>\scr[\E_{\widetilde{C}}](p)$, and hence outputs ``no'' as required.

Note that there are less than~$|C|$ possibilities to choose~$\opp$. The values~$\Delta(c)$ can be computed in time $O(|V| \cdot |C|)$ and then $\widetilde{C}$ can also constructed in time $O(|C|)$. Finally, the score of~$p$ and~$\opp$ in the reduced election over~$\widetilde{C}$ can also be computed in  $O(|V| \cdot |C|)$ time, so the total running time of Algorithm~\ref{alg:Borda} is $O(|C|^2 \cdot |V|)$.
\end{proof}

\subsection{Short and Veto-Like Scoring Rules}
\label{sec:short+veto}

In this section we examine the computational complexity of the \NPP{} problem in detail for all short and Veto-like scoring rules.
We begin by proving that \NPP{} for such voting rules is $\coNP$-complete even if the maximum party size is $\maxsize=2$.
Thus, for short scoring rules we generalize a result by Faliszewski et al.~\cite{faliszewski2016} who showed that \NPP{} for Plurality is $\coNP$-complete.

The proofs of Theorems~\ref{thm:short-coNP-s} and \ref{thm:veto-coNP-s}
are based on a polynomial reduction from  {\sc(2,2)-E3-SAT},  
the problem of deciding whether a given 3-CNF formula 
where each variable occurs twice as a positive and twice as a negative literal is satisfiable;
this problem was proved to be  $\NP$-complete by Berman et al.~\cite{berman-karpinski-scott-balanced3SAT}.
Before presenting Theorem~\ref{thm:short-coNP-s}, let us 
formally define the problem \textsc{(2,2)-E3-SAT}:

\vskip1ex

\begin{minipage}{\columnwidth}
\begin{framed}
\noindent {\bf Problem }{\sc(2,2)-E3-SAT}

\noindent {\bf Instance:} A CNF Boolean formula $\varphi=C_1 \wedge \dots \wedge C_m$ over a set~$X=\{x_1,\dots,x_n\}$ of variables where each clause~$C_j$ contains exactly three distinct literals and each variable occurs twice as a positive and twice as a negative literal.

\noindent {\bf Question:}
Is $\varphi$ satisfiable?
    
\end{framed}    
\end{minipage}

\vskip1ex

Henceforth, we let $y_j^1$, $y_j^2$, and $y_j^3$ denote the three literals in~$C_j$ for each $j \in [m]$.

\begin{theorem}
\label{thm:short-coNP-s}
Let $\R$ be a short voting rule based on a positional scoring vector that has the form $(a_1,a_2,\dots,a_{\ell},0,\dots,0)$ for some constant~$\ell \geq 1$ such that $a_{\ell}>0$. Then
\NPP{} for~$\R$ is $\coNP$-complete even if the maximum party size is $\maxsize=2$.
\end{theorem}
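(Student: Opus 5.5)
Membership in $\coNP$ follows from Observation~\ref{obs:incoNP}, since winners under any positional scoring rule are computable in polynomial time. For hardness, we reduce \textsc{(2,2)-E3-SAT} to the complement of \NPP{}: given $\varphi$, we build an instance $(\E,\P,p)$ with all parties of size at most two such that some nomination makes $p$ lose if and only if $\varphi$ is satisfiable.

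\textbf{Construction.} For every variable $x_i$ we create a party $\{x_i,\overline{x}_i\}$, so that nominating $x_i$ (resp.\ $\overline{x}_i$) encodes setting $x_i$ to true (resp.\ false). For every clause $C_j$ we create a singleton party $\{c_j\}$ with a clause candidate. We also create a single rival $w$, the distinguished candidate $p$, and a polynomial number of singleton dummy candidates. Dummies are used to pad the top $\ell$ positions of voters, so that every nontrivial pair of scores we care about is controlled.

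\textbf{Gadget for each literal occurrence.} For each clause $C_j$ and each literal $y_j^k$ in it, we add a voter ranking the literal candidate $\ol{y_j^k}$ (the negation, i.e.\ the candidate nominated when $y_j^k$ is false) at position $\ell$. Above it we place $\ell-1$ dummies, and directly below it we place $c_j$, followed by everything else. If $\ol{y_j^k}$ is nominated, it absorbs the $a_\ell$ points and $c_j$ gets $0$ from this voter. If $\ol{y_j^k}$ is not nominated, $c_j$ moves up to position $\ell$ and gains $a_\ell$. Hence $c_j$ gains $a_\ell$ points for each true literal in $C_j$.

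\textbf{Threshold.} Next we add balancing voters, using dummies in the top positions, so that $p$ has a fixed score $S$ independent of the nomination. This is easy because $p$ is always nominated and never sits directly below a candidate from a size-two party. We tune the scores so that each $c_j$ has base score exactly $S-a_\ell$, so $c_j$ reaches $S$ (ties with $p$) only with at least one true literal; no other candidate ever exceeds $S$. The remaining task is to make $p$ lose exactly when \emph{all} clauses are satisfied.

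\textbf{Main obstacle.} A single satisfied clause only ties $c_j$ with $p$, and $p$ is still a co-winner. So we need an AND over clauses, which a scoring rule does not provide directly. The plan is to chain the clause candidates instead of making them separate parties. Use the rival $w$ and put $c_1,\dots,c_m$ into a ``cascade'': for each $j$, add a voter with $\ell-1$ dummies, then $c_j$, then $w$ directly below, so that if $c_j$ is absent, $w$ gets $a_\ell$. For this, each clause needs an \emph{optional} candidate, so we make clause parties of size two: $\{c_j,d_j\}$. We then arrange that the nomination of $c_j$ rather than $d_j$ is harmful for the adversary unless the clause is satisfied, and that $w$ exceeds $S$ only if every party's choice is favorable.

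Getting this right with $a_1\ge\dots\ge a_\ell$ arbitrary is the delicate part. We restrict all gadget interactions to position $\ell$ versus position $\ell+1$, so only the value $a_\ell>0$ matters, and we use dummies to fix each candidate's exact baseline. Finally, we verify both directions. A satisfying assignment gives the adversary a nomination in which $w$ beats $p$. Conversely, if some nomination makes a candidate beat $p$, the score bookkeeping forces that candidate to be $w$, and forces the literal nominations to satisfy every clause. The sizes are polynomial, and all parties have size at most two.
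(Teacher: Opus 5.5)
Your membership argument is fine, but the hardness reduction has a genuine gap. The step that carries all the difficulty, turning per-clause information into a condition on \emph{all} clauses, is only announced (``we then arrange that the nomination of $c_j$ rather than $d_j$ is harmful for the adversary unless the clause is satisfied'') and never constructed. In the voters you do describe, nothing ties the choice between $c_j$ and $d_j$ to the literal nominations. So the adversary simply nominates every $d_j$, and $w$ collects $m a_\ell$ points whatever the assignment. Your first gadget also has to be discarded, not merely supplemented: if $c_j$ has base score $S-a_\ell$ and gains $a_\ell$ per true literal, it reaches $S+a_\ell>S$ as soon as $C_j$ has two true literals. Then $p$ loses whenever some clause can be given two true literals, regardless of satisfiability. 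As it stands, the proposal is a plan, not a proof.

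The missing idea is to stop keeping $p$'s score fixed and put $p$ itself under the clause gadget. A candidate placed right after $\ell-1$ dummies gets $a_\ell$ from a voter exactly when \emph{none} of the non-dummy candidates above it is nominated. So a voter ranking $y_j^1, y_j^2, y_j^3$, then $\ell-1$ dummies, then $p$, gives $p$ the value $a_\ell$ precisely when $C_j$ is falsified by the assignment encoded by the parties $\{x_i,\overline{x}_i\}$. Now $p$ profits from \emph{any} false clause, and the conjunction you were struggling with comes for free by complementation.

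To finish, give a singleton rival $w$ a fixed score exactly $a_\ell$ above $p$'s base. Use one voter with $\ell-1$ dummies followed by $w$, plus $2m+1$ voters ranking $w$ first and $2m+1$ ranking $p$ first, each followed by dummies. With two clause voters per clause, $w$ beats $p$ if and only if every clause is satisfied. Literal candidates get at most $4a_1$ points (each literal occurs twice), and neither they nor the dummies can compete. This is the paper's construction.

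Your cascade could in principle be completed the same way. Add a voter ranking $c_j, y_j^1, y_j^2, y_j^3$, then $\ell-1$ dummies, then $p$, so that $p$ is rewarded exactly when $d_j$ is chosen for a false clause. But once that gadget exists, the parties $\{c_j,d_j\}$ are superfluous.
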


\begin{proof}
Let $\varphi=C_1 \wedge \dots \wedge C_m$ over variable set~$X=\{x_1,\dots,x_n\}$ be our instance of {\sc(2,2)-E3-SAT}. The structure of a {\sc(2,2)-E3-SAT} Boolean formula implies that  $m \geq 2$.

We shall construct an instance $I$ of \NPP\ with distinguished candidate~$p$ contained in party~$P=\{p\}$ in a way such that $p$ is the unique winner of the reduced election for every possible set of nominations by other parties if and only if $\varphi$ is \emph{not} satisfiable.

In addition to~$P$, let us define party $\oppParty  = \{\opp\}$ and a party~$X_i=\{x_i, \overline{x}_i\}$ for each variable~$x_i \in X$. Moreover, we additionally define a set~$D_1\cup D_2\cup D_3\cup D_4$ of dummy candidates with $|D_h|=\ell-1$ for $h=1,2,3,4$, with each dummy forming its own single-candidate party. Observe that the number of parties is $n+4\ell-2$ and the maximum size of any party is $\maxsize=2$.

The set of voters is $V = \{w\}\cup U\cup U' \cup V_{0} \cup V'_{0}$ where $|V_0|=|V_0'|=2m+1$ and $U=\{u_1,\dots,u_m\}, U'=\{u'_1,\dots,u'_m\}$. Thus, there are $6m+3$ voters.
The preference profile is shown below. Note that in this and other proofs $[\cdots]$ denotes the remaining candidates (not explicitly stated in  the preference list) in an arbitrary strict order; moreover, when we write a  subset of candidates in a preference list, this means  the candidates of this set written in an arbitrary strict order. 

\[
\begin{array}{ll}
    w: &  D_1 \succ \opp \succ [\cdots];\\
       v \in V_{0}: &  \opp \succ D_2  \succ [\cdots];\\
    v' \in V'_{0}:  &  p \succ D_3 \succ [\cdots];\\
     u_j,u'_j \textrm{ for } j \in [m]:  & y_j^1 \succ y_j^2 \succ y_j^3 \succ D_4 \succ p \succ [\cdots]. \\
\end{array}
\]

Consider a reduced election~$\E$.
We see that candidate~$\opp$ earns $a_1$ points from each voter in~$V_0$, $a_{\ell}$ points from voter~$w$ and, thanks to dummy candidates, no points elsewhere. Hence, we get  
$\scr(\opp)=(2m+1)a_1 + a_{\ell}$.
Candidate~$p$ receives $(2m+1)a_1$ points from the voters in~$V'_0$. Since each literal occurs at most twice in~$\varphi$, both candidates in~$X_i$ receive at most $4a_1$ points from voters in $U\cup U'$. Dummy candidates (present only if $\ell\geq 2$) from $D_1, D_2,D_3,$ and~$D_4$ receive at most $a_1$, $2ma_1$, $(2m+1)a_2$, and  $(2m+1)a_2$ points, respectively. Each of these values is less than  $\scr(\opp)$.

Now assume that there is a satisfying truth assignment for~$\varphi$. Let each party~$X_i$ nominate the candidate corresponding to the true literal in~$\{x_i,\overline{x}_i\}$. Since each clause contains at least one true literal, we know that at least one candidate in~$\{y_j^1,y_j^2,y_j^3\}$ is nominated for each $j \in [m]$. 
Therefore, candidate~$p$ receives no additional points from voters in~$U \cup U'$ and  she is not  the  winner in the resulting election as $\scr(p)=(2m+1)a_1<\scr(\opp)$.

Conversely,  assume that $\varphi$ admits no satisfying truth assignment. Hence, 
irrespective of the nominations from parties~$X_i$, $i \in [n]$, there exists at least one clause~$C_j$ for which no literal in~$\{y_j^1,y_j^2,y_j^3\}$ is set to true, implying that both voters $u_j$ and $u'_j$ allocate $a_\ell$ additional points to candidate~$p$. This ensures that $p$ is the unique winner of any obtained reduced election~$\E$ with  $\scr(p)=(2m+1)a_1+2a_\ell$; hence, she is the necessary president in $I$.
\end{proof}

Let us now turn our attention to Veto-like scoring rules.
The proof of Theorem~\ref{thm:veto-coNP-s} also presents a reduction from the \textsc{(2,2)-E3-SAT} problem; recall the notation after the definition of \textsc{(2,2)-E3-SAT} given
at the beginning of the section.

\begin{theorem}
\label{thm:veto-coNP-s}
Let $\R$ be a Veto-like scoring voting rule, based on a positional scoring vector that has the form $(a,\dots,a,a_1,a_2,\dots,a_{\ell})$ for some constant~$\ell \geq 1$ such that $a> a_{\ell}$. Then
\NPP{} for~$\R$ is $\coNP$-complete even if the maximum party size is $\maxsize=2$.
\end{theorem}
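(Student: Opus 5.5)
The plan is to reuse the reduction pattern of Theorem~\ref{thm:short-coNP-s} from \textsc{(2,2)-E3-SAT}, adapted to how Veto-like rules award points. Under such a rule a candidate gets the full value~$a$ from a voter unless it lies among that voter's last $\ell$ \emph{nominated} candidates. So the gadgets will use the bottom of the preference lists, padded by dummy candidates. Membership in $\coNP$ follows from Observation~\ref{obs:incoNP}.

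\textbf{Construction.} We take parties $P=\{p\}$, $\oppParty=\{\opp\}$, a party $X_i=\{x_i,\overline{x}_i\}$ for each variable, and a set~$D$ of $\ell-1$ dummies, each dummy forming a singleton party. For every clause $C_j$ we add two voters $u_j,u'_j$ with preference
\[ p \succ [\cdots] \succ \opp \succ y_j^1 \succ y_j^2 \succ y_j^3 \succ D. \]
Since all dummies are always nominated, the last $\ell$ nominees of $u_j$ are $D$ together with the lowest nominated literal of $C_j$. If $C_j$ is satisfied by the nomination, this lowest nominee is some $y_j^k$, so $\opp$ gets~$a$. If $C_j$ is unsatisfied, it is $\opp$ itself, so $\opp$ gets only~$a_1$. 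Candidate~$p$ gets~$a$ from these voters in all cases. We also add one voter~$z$ with preference
\[ [\cdots] \succ \opp \succ p \succ D. \]
This voter gives $\opp$ the value~$a$ and $p$ the value~$a_1$. In total, when all clauses are satisfied we get $\scr(\opp)-\scr(p)=a-a_1>0$. When at least one clause is unsatisfied, the voters $u_j,u'_j$ of that clause cost $\opp$ an extra $2(a-a_1)$, so $p$ beats~$\opp$. The inequality $a>a_1$ is guaranteed by the definition of Veto-like rules.

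\textbf{Controlling the other candidates.} We also need every literal candidate and every dummy to have strictly lower score than~$p$ (and than~$\opp$ in the satisfiable case) under every nomination. For this I would add a polynomial block of ``veto'' voters who rank $p$ and $\opp$ at the top. Their bottoms rotate through the literal candidates and the dummies, using the other dummies as padding, so that each such candidate sits in the last $\ell$ nominated positions of sufficiently many of them. Each such voter gives exactly $a$ to both $p$ and~$\opp$, so the difference computed above is unchanged. Each literal or dummy candidate thereby loses at least $a-a_\ell$, possibly only $a-a_1$, per such voter. Roughly $2m+2$ such voters per candidate should suffice to push it below~$p$, since it can gain at most $O(m)$ relative advantage elsewhere.

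\textbf{Correctness.} A satisfying assignment yields the nomination in which each $X_i$ nominates its true literal; then every clause is satisfied, $\opp$ beats~$p$, and $p$ is not a necessary president. If $\varphi$ is unsatisfiable, every nomination leaves some clause unsatisfied, so $p$ beats~$\opp$ and all other candidates, and $p$ is a necessary president. The maximum party size is $2$. I expect the main obstacle to be this bookkeeping of the veto voters. The subtlety is that the set of last $\ell$ nominees depends on which literals are nominated: a non-nominated literal placed at the bottom shifts the next candidate into the vetoed zone. The bottoms of the veto voters must therefore be arranged, using dummies as padding, so that the intended candidate is vetoed regardless of the nominations, while $p$ and~$\opp$ never are.
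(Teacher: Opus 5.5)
Your reduction is essentially the paper's: your clause voters $u_j,u'_j$ and voter $z$ play the roles of its voters in $V_j$ and $v_0$, and the veto block you leave open is exactly its two voters per variable with preference $[\cdots] \succ x_i \succ \overline{x}_i \succ D$. In these votes, whichever literal of $X_i$ is nominated lands precisely in the $\ell$-th-from-last slot, and $p,\opp$ never fall into the bottom $\ell$ since $n \geq 1$. This already bounds every literal nominee by $(2n+2m-1)a+2a_1=\scr(p)-(a-a_1)$, and the dummies occupy the last $\ell-1$ slots of every vote, so they need no separate treatment; the bookkeeping you flag as the main obstacle is therefore trivial, and two such voters per variable suffice rather than about $2m+2$ per candidate.
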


\begin{proof}
We again present a polynomial reduction from  {\sc(2,2)-E3-SAT}.
We shall construct an instance $I$ of \NPP\ with distinguished candidate~$p$ contained in party~$P=\{p\}$ in a way such that $p$ is the  winner of the reduced election for every possible set of nominations by other parties if and only if $\varphi$ is \emph{not} satisfiable.

In addition to~$P$, let us define party $\oppParty  = \{\opp\}$ and a party $X_i=\{x_i, \overline{x}_i\}$ for each variable~$x_i \in X$. Moreover, we create a set~$D$ of~$\ell-1$ dummy candidates, with each dummy constituting its own single-candidate party. Observe that the number of parties is $n+\ell+1$ and the maximum size of any party is $\maxsize=2$.

The set of voters is defined as $V=\{v_0\}\cup W\cup \bigcup_{j \in [m]}V_j$ where $W=\bigcup_{i \in [n]} W_i$, $|W_i|=2$ for each $i\in[n]$ and $|V_j|=2$ for each ${j\in [m]}$. The number of voters is thus $2n+2m+1$ and their  preferences are as follows:
\[
\begin{array}{ll}
    v_0: & [\cdots]  \succ p \succ D;\\
    w \in W_i \text{ for }i \in [n]:  &  [\cdots] \succ x_i \succ \overline{x}_i \succ D;\\
    v \in V_j  \text{ for }j \in [m]:   &  [\cdots]\succ \opp\succ y_j^1 \succ y_j^2 \succ y_j^3 \succ D.\\
        \end{array}
\] 

Observe that
$\scr(p)=(2n+2m)a+a_1$
and since each nominee of party $X_i$ receives from the voters in $W_i$ only $a_1$ points, we have that
$\scr(x_i)$ as well $\scr(\overline{x}_i)$ is at most $(2n+2m-1)a+2a_1$, which is strictly smaller than $\scr(p)$ and so a nominee of any party $X_i$ can never be the winner of any reduced election. Further, candidate~$\opp$ receives $(2n+1)a$ points from voters in $\{v_0\}\cup W$. 

Now assume that there is a satisfying truth assignment for~$\varphi$. Let each party~$X_i$ nominate the candidate corresponding to the true literal in~$\{x_i,\overline{x}_i\}$. Since each clause contains at least one true literal, at least one candidate in~$\{y_j^1,y_j^2,y_j^3\}$ is nominated for each $j \in [m]$. 
Hence, $\scr(\opp)=(2n+2m+1)a>\scr(p)$, so for these nominations candidate~$\opp$ is the winner of  the resulting reduced election.

Conversely,  assume that $\varphi$ admits no satisfying truth assignment. Hence, irrespective of the nominations from parties~$X_i$, $i \in [n]$, there exists at least one clause~$C_j$ for which no literal in~$\{y_j^1,y_j^2,y_j^3\}$ is set to true. Thus, candidate~$\opp$ receives less than $a$ points from voters in $V_j$, so $\scr(\opp)\le (2n+2m-1)a+2a_1<\scr(p)$. Therefore $p$ is the unique winner of any reduced election, so she is the necessary president in $I$.
\end{proof}

Given the intractability results in Theorems~\ref{thm:short-coNP-s} and~\ref{thm:veto-coNP-s}, we investigate the parameterized complexity of \NPP{} for short and Veto-like scoring rules. In particular, we consider the number~$t$ of parties and the number~$\tau$ of voter types as parameters.

\subsubsection{Parameterizing by the Number of Parties}

Here we show that \NPP{} is $\mathsf{W}[2]$-hard for both short and Veto-like scoring rules with parameter~$t$. Our proofs use similar ideas as the proofs of Theorems~1 and~6 in \cite{schlotter2024}, respectively. For both classes of scoring rules, 
we provide a parameterized reduction from the classic \textsc{Hitting Set} problem which is $\mathsf{W}[2]$-hard when parameterized by the size of the desired hitting set (see Cygan et al.~\cite{cygan2015parameterized}).
Let us formally define this problem:

\begin{framed}
\noindent {\bf Problem }{\sc Hitting Set}

\noindent {\bf Instance:} A tuple $H=(S, {\mathcal F},k)$ where $S = \{s_1, s_2, \dots, s_n\}$ is a set of elements, $\mathcal{F} = \{F_1, F_2, \dots, F_m\} \subseteq 2^{S}$ a family of subsets of $S$, and $k$ an integer.

\noindent {\bf Question:}
Does there exist a set $S'\subseteq S$ such that $|S'|\le k$ and $S'\cap F_j\ne \emptyset$ for each $j\in [m]$?
    
\end{framed}

\begin{theorem}
\label{thm:short-W2hard-t}
Let $\R$ be a short voting rule, based on a positional scoring vector %
of the form $(a_1,a_2,\dots,a_{\ell},0,\dots,0)$ for some constant~$\ell \geq 1$ such that $a_{\ell}>0$. Then
\NPP{} for~$\R$ is  
$\mathsf{W}[2]$-hard  when parameterized by $t$, the number of parties.
\end{theorem}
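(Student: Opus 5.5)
The plan is a parameterized reduction from \textsc{Hitting Set}, parameterized by~$k$. It follows the construction in the proof of Theorem~\ref{thm:short-coNP-s}. The only change is that the $n$ variable parties are replaced by $k$ ``selection'' parties, each containing a copy of every element. The number of parties then depends only on $k$ and~$\ell$, and $\ell$ is a constant.

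\emph{Construction.} Let $(S,\mathcal{F},k)$ be an instance of \textsc{Hitting Set}. We may assume each $F_j$ is nonempty, since otherwise the instance is trivially a ``no''-instance.
\begin{itemize}
\item Create parties $P=\{p\}$ and $\oppParty=\{\opp\}$.
\item For each $h\in[k]$, create a party $S^h=\{s_1^h,\dots,s_n^h\}$.
\item Create dummy sets $D_1,\dots,D_4$ with $|D_h|=\ell-1$, each dummy forming its own singleton party.
\end{itemize}
Hence $t=k+2+4(\ell-1)$, which is bounded by a function of~$k$ alone. The voters are as in Theorem~\ref{thm:short-coNP-s}:
\begin{itemize}
\item $w\colon D_1\succ \opp\succ[\cdots]$;
\item $2m+1$ voters $\opp\succ D_2\succ[\cdots]$;
\item $2m+1$ voters $p\succ D_3\succ[\cdots]$;
\item for each $j\in[m]$, two voters $u_j,u'_j$ with preference $\{s_i^h: s_i\in F_j,\ h\in[k]\}\succ D_4\succ p\succ[\cdots]$.
\end{itemize}
The reduction is polynomial, and together with Observation~\ref{obs:incoNP} it gives $\mathsf{W}[2]$-hardness.

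\emph{Score bookkeeping.} In every reduced election the following holds.
\begin{itemize}
\item $\scr(\opp)=(2m+1)a_1+a_\ell$.
\item Every element nominee gets points only from the voters $u_j,u'_j$, so its score is at most $2ma_1<\scr(\opp)$.
\item The dummies get at most $a_1$, $2ma_1$, $(2m+1)a_2$ and $2ma_2$ points, respectively. All of these are at most $(2m+1)a_1$.
\item $p$ gets $(2m+1)a_1$ points from its own voters.
\end{itemize}
Consider a voter $u_j$. If some nominee corresponds to an element of~$F_j$, that nominee together with the $\ell-1$ dummies of $D_4$ fills the top $\ell$ positions, so $p$ gets $0$ from $u_j$. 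Otherwise $p$ sits exactly at position $\ell$ and gets $a_\ell$.

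\emph{Correctness.} The nominees of $S^1,\dots,S^k$ correspond to a multiset of at most $k$ elements, and every set of at most $k$ elements arises in this way: duplicates are allowed, so a set with fewer than $k$ elements is padded by repeating one of them.
\begin{itemize}
\item If a hitting set of size at most $k$ exists, nominate it. Then $p$ gets exactly $(2m+1)a_1<\scr(\opp)$, so $p$ is not a winner.
\item If no hitting set exists, every nomination leaves some $F_j$ unhit. Then $p$ gets at least $(2m+1)a_1+2a_\ell>\scr(\opp)$, which is also at least every other score, so $p$ is a winner.
\end{itemize}
Thus $p$ is a necessary president if and only if no hitting set of size at most $k$ exists.

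\emph{Main care point.} The delicate step is making sure no element candidate or dummy can beat $p$. In particular, an element appearing in many sets could collect many points; this is bounded by $2ma_1$, which is below $(2m+1)a_1$. Possible ties with $D_3$ when $a_2=a_1$ are harmless, because we use the non-unique winner model.
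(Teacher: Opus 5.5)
Your proposal is correct and is essentially the paper's own reduction from \textsc{Hitting Set}: the same parties $P$, $\oppParty$, $k$ element-copy parties and four dummy blocks, and the same $6m+3$ voters up to renaming of the dummy sets. Your remark that duplicate nominations cover hitting sets of size less than~$k$ is extra care the paper omits. One small bookkeeping slip: with your labeling the dummy bounds for $D_1,\dots,D_4$ should be $a_1$, $(2m+1)a_2$, $(2m+1)a_2$, and $2ma_1$; all of these are still at most $(2m+1)a_1$, so the argument is unaffected.
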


\begin{proof}
We present a parameterized reduction from the {\sc Hitting Set} problem.

 For any instance $H=(S, {\mathcal F},k)$ of the {\sc Hitting Set}  we shall construct an instance $I$  of \NPP{} with distinguished candidate~$p$ contained in party $P = \{p\}$ in such a way that $p$ is a winner of the reduced election for every possible nominations by other parties if and only if  $H$ is ``no''-instance of \textsc{Hitting Set}.

In addition to~$P$, let us define parties $\oppParty  = \{\opp\}$ and $P_i = \{s^i_1, \dots, s^i_n\}$ for each~$i \in [k]$, where candidate~$s^i_r$ represents the $i^\textrm{th}$ copy of the element $s_r\in S$ for some $r\in [n]$. We also use the notation  $F_j^i=\{s^i : s\in F_j\}$ for the set of the $i^\textrm{th}$ copies of the elements contained in $F_j$, $j\in [m]$. Moreover, we additionally define a set~$D_1\cup D_2\cup D_3\cup D_4$ of dummy candidates with $|D_h|=\ell-1$ for $h=1,2,3,4$,  with each dummy forming its own single-candidate party. Therefore, there are altogether $t=k+4\ell-2$ parties. 

The set of voters is $V = \{w\}\cup U\cup U' \cup V_{0} \cup V'_{0}$ where $|V_0|=|V_0'|=2m+1$ and $U=\{u_1,\dots,u_m\}, U'=\{u'_1,\dots,u'_m\}$. Thus, there are $6m+3$ voters.
The preference profile is shown below.

\[
\begin{array}{ll}
     w: &  D_1 \succ \opp \succ [\cdots];\\
    u_j,u'_j \textrm{ for } j \in [m]:  &  F_j^1 \succ F_j^2 \succ \dots \succ F_j^k \succ D_2 \succ p \succ [\cdots]; \\
    v \in V_{0}: &  \opp \succ D_3  \succ [\cdots];\\
    v' \in V'_{0}:  &  p \succ D_4 \succ [\cdots].\\
\end{array}
\]

Recall that $\ell$ is a constant, so the number of parties is only a function of~$k$, and thus the presented reduction is a parameterized reduction.
It is also a polynomial-time reduction, so by the $\NP$-hardness of \textsc{Hitting Set},  our proof yields not only $W[2]$-hardness for parameter~$t$, but also $\mathsf{NP}$-hardness.

Consider a reduced election~$\E$.
We see that candidate~$\opp$ earns $a_1$ points from each voter in $V_0$, $a_{\ell}$ points from voter $w$ and, thanks to dummy candidates, no points elsewhere. Hence, we get 
$\scr(\opp)=(2m+1)a_1 + a_{\ell}$.
For candidate~$p$ we have $\scr(p)=(2m+1)a_1$, she receives all her points  from the voters in $V'_0$. Each candidate~$s_r^i$ receives at most $2ma_1$ points from voters in $U\cup U'$. Dummy candidates (present only if $\ell\geq 2$) from $D_1, D_2,D_3,$ and $D_4$ receive at most $a_1$, $2ma_1$, $(2m+1)a_2$, and \hbox{$(2m+1)a_2$} points, respectively. Each of these values is less than  $\scr(\opp)$.

Now assume that there is a hitting set $S' = \{s_{(1)}, \dots, s_{(k)}\}$ for the instance $H$ where $s_{(i)}$ denotes the $i^\textrm{th}$ element in $S'$ for some fixed order. Let each party $P_i$ nominate the  candidate corresponding to the $i^\textrm{th}$ copy of  element $s_{(i)} \in S'$, i.e., the candidate $s_{(i)}^i$. Since $S' \cap F_j \neq \emptyset$ for each $j \in [m]$, this ensures that at least one candidate in the set $F_j^1 \cup F_j^2 \dots \cup F_j^k$ is nominated. 
Therefore, candidate~$p$ receives no additional points from voters in~$U \cup U'$, and  she is not among the winners of the resulting reduced  election.

Conversely,  assume that $H$ does not admit any hitting set of size~$k$. Hence, 
for each subset $S'\subseteq S$ of size at most~$k$  there exists at least one set~$F_j\in {\mathcal F}$ with $S'\cap F_j=\emptyset$. This in turn implies  that for any nominations by parties $P_i$, $i \in [k]$, there exists at least one index~$j$ such that no candidate in~$F_j^1 \cup F_j^2 \cup \dots \cup F_j^k$ is nominated and, therefore, both voters $u_j$ and $u'_j$ allocate $a_\ell$ additional points to candidate~$p$, ensuring that $p$ is the unique winner of the obtained reduced election~$\E$ with $\scr(p)=(2m+1)a_1+2a_\ell$.

Hence, $p$ is the winner of any reduced election, thus the necessary  winner in $I$, if and only if $H$ is a no-instance of the {\sc Hitting Set}. 
\end{proof}

Next, we present the analog of Theorem~\ref{thm:short-W2hard-t} for Veto-like voting rules.

\begin{theorem}
\label{thm:Veto-W2hard-t}
Let $\R$ be a Veto-like voting rule, based on a scoring vector of the form  $(a,\dots,a,a_1,a_2,\dots,a_\ell)$ for some constant $\ell \geq 1$ such that $a>a_1$. Then 
  \NPP{}  for~$\R$ is %
  $\mathsf{W}[2]$-hard  when parameterized by $t$, the number of parties.
\end{theorem}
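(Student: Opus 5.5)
We give a parameterized reduction from \textsc{Hitting Set}, parameterized by~$k$. It combines the party construction from the proof of Theorem~\ref{thm:short-W2hard-t} with the voter gadgets from the proof of Theorem~\ref{thm:veto-coNP-s}.

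\textbf{Construction.} Given $H=(S,\mathcal F,k)$, we build an instance $I$ as follows.
\begin{itemize}
\item Parties: $P=\{p\}$, $\oppParty=\{\opp\}$, and $P_i=\{s^i_1,\dots,s^i_n\}$ for each $i\in[k]$. As before, $F_j^i$ denotes the set of $i^{\textrm{th}}$ copies of the elements of~$F_j$.
\item Dummies: a set $D$ of $\ell-1$ dummy candidates, each forming a singleton party. Hence $t=k+\ell+1$, which depends only on~$k$ since $\ell$ is a constant.
\item Voters: $V=\{v_0\}\cup\bigcup_{i\in[k]}W_i\cup\bigcup_{j\in[m]}V_j$ with $|W_i|=|V_j|=2$, so $N:=|V|=1+2k+2m$. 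Their preferences are
\[
\begin{array}{ll}
v_0: & [\cdots]\succ p\succ D;\\
w\in W_i \text{ for } i\in[k]: & [\cdots]\succ P_i\succ D;\\
v\in V_j \text{ for } j\in[m]: & [\cdots]\succ \opp\succ F_j^1\succ F_j^2\succ\dots\succ F_j^k\succ D.
\end{array}
\]
\end{itemize}
Here $[\cdots]$ contains $p$ and every element copy not listed explicitly in that vote.

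\textbf{Score analysis.} In any reduced election, $D$ fills the last $\ell-1$ positions of every voter. So the only candidate of a vote that can receive $a_1$ rather than $a$ is the lowest-ranked non-dummy nominee.
\begin{itemize}
\item Candidate $p$: we get $\scr(p)=(N-1)a+a_1$.
\item Nominee of $P_i$: it is the lowest non-dummy nominee for both voters in~$W_i$, so its score is at most $(N-2)a+2a_1<\scr(p)$.
\item Dummies: each dummy scores at most $Na_1$, and $Na_1<(N-1)a+a_1$ because $a>a_1$.
\item Candidate $\opp$: it gets $a$ from $v_0$ and from all of~$W$. From a voter in $V_j$ it gets $a$ if some candidate of $F_j^1\cup\dots\cup F_j^k$ is nominated, since that candidate then sits below~$\opp$. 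Otherwise $\opp$ is the lowest non-dummy nominee and gets only $a_1$.
\end{itemize}

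\textbf{Correctness.} Suppose $S'$ is a hitting set of size at most~$k$. We may pad $S'$ with repeated elements, because copies of the same element lie in different parties. Let $P_i$ nominate the copy $s_{(i)}^i$ of the $i^{\textrm{th}}$ element of~$S'$. Then every $F_j$ is hit, so $\scr(\opp)=Na>\scr(p)$ and $p$ is not a winner.

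Conversely, suppose $H$ has no hitting set of size at most~$k$. Every nomination corresponds to such a set of at most $k$ elements, so some $F_j$ contains no nominated element. Then $\scr(\opp)\le(N-2)a+2a_1<\scr(p)$, and $p$ is the unique winner of every reduced election. Hence $p$ is a necessary president if and only if $H$ is a ``no''-instance, which gives $\mathsf{W}[2]$-hardness for parameter~$t$.

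\textbf{Main obstacle.} The delicate step is ensuring that the element candidates can never beat~$p$ while the positional effect on~$\opp$ still works. This is what the $W_i$ voters achieve by pushing each $P_i$ nominee into a low position. The ordering inside $V_j$ must also guarantee that $\opp$ is the lowest non-dummy nominee exactly when $F_j$ is not hit, which is why every element copy outside $F_j^1\cup\dots\cup F_j^k$ is placed above~$\opp$ in those votes.
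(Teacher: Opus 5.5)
Your proposal is correct and is essentially the paper's own reduction from \textsc{Hitting Set}: the same parties $P$, $\oppParty$, $P_1,\dots,P_k$ plus $\ell-1$ dummy singletons, the same voters $v_0$, $W_i$, $V_j$ with the same vote structure, and the same comparison of every score against $\scr(p)=(N-1)a+a_1$. Your bound $(N-2)a+2a_1$ for the nominees of the parties $P_i$ is the accurate one (the paper's stated bound $(2m+1)a+2ka_1$ overlooks that voters in $W_{i'}$, $i'\neq i$, give such a nominee $a$ points, though its conclusion is unaffected), and your padding remark correctly handles hitting sets of size less than~$k$.
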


\begin{proof}
We  again present a parameterized reduction from {\sc Hitting Set}. Let $H$ be an instance of {\sc Hitting Set}, we construct and instance $I$ of \NPP.

Again, $p$ is our distinguished candidate contained in the singleton party ${P = \{p\}}$, and we define the parties $\oppParty =\{\opp\}$ and  $P_i = \{s^i_1, \dots, s^i_n\}$ for each~$i \in [k]$. For each $j \in [m]$, we write $F_j^i=\{s^i : s\in F_j\}$ for the set of $i^\textrm{th}$ copies of the elements contained in $F_j$.
Further, we create a set $D$ of $\ell-1$ dummy candidates, with each dummy constituting its own single-candidate party. The number of parties is therefore $t=k+\ell+1$. The presented reduction is a parameterized reduction, since $\ell$ is a constant, and so the number of parties is only a function of $k$.

The set of voters is  
\[V=\{v_0\}\cup W\cup V'
\]
where W=$\bigcup_{i \in [m]}W_i$, $V'=\bigcup_{j \in [k]} V_j$, $|W_i|=2$ for each $i\in[k]$ and $|V_j|=2$ for each $j\in [m]$. The number of voters is thus $2k+2m+1$. Their  preferences are as follows:
\[
\begin{array}{ll}
    v_0: & [\cdots]  \succ p \succ D;\\
    w \in W_i \text{ for }i \in [k]:  &  [\cdots] \succ s_1^i \succ s_2^i \succ \dots \succ s_n^i \succ D;\\
    v \in V_j \text{ for }j \in [m]:   &  [\cdots]\succ \opp\succ F_j^1 \succ F_j^2 \succ \dots \succ F_j^k \succ D.
    
    \end{array}
\] 
It is easy to see that for any reduced  election $\E$ we have 
\begin{eqnarray*}
    \scr(p)&=&(2k+2m)a+a_1;\\
    \scr(d)&\le& (2k+2m+1)a_2   \mbox{\ for any\ }d\in D;\\
    \scr(s^i_r)&\le& (2m+1)a+2ka_1  \mbox{\ for any\ }s^i_r; i\in[k], r\in[n],
\end{eqnarray*}
which implies $\scr(d)<scr(p)$ for each dummy candidate ${d\in D}$ and $\scr(p_i)<scr(p)$ for each nominee of party $P_i$, $i\in [k]$.
Candidate $\opp$ receives $(2k+1)a$ points from voters in $\{v_0\}\cup W$.

Now assume that  a hitting set $S'=\{ s_{(1)}, s_{(2)},\dots, s_{(k)}\}\subseteq S$ exists for  instance $H$. Let party $P_i$ nominate its member $s_{(i)}^i$ for each $i\in[k]$. Then candidate $\opp$ is in the preference list of each voter in  $V'$  followed by at least $\ell$ other candidates, so her score becomes $\scr(\opp)=(2k+2m+1)a$ and therefore for these nominations she is the only winner of $\E$. Thus $p$ is not a necessary president.

Conversely, assume that $H$ does not admit a hitting set. Then there exists  at least one index $j\in [m]$ with $S' \cap F_j=\emptyset$. This means that for any possible nominations from parties $P_i$, there exists some $ j \in [m]$ for which the dummy candidates  immediately follow candidate~$\opp$ in the preferences of voters in~$V_j$ and so $\scr(\opp)\le (2k+1)a+(2m-2)a+a_1$. This is strictly smaller than $\scr(p)$ and therefore candidate~$p$ is a necessary president in $I$.
\end{proof}

\subsubsection{Parameterizing by the Number of Voter Types}

Next, we show that \NPP{}  becomes fixed-parameter tractable for both short and Veto-like positional scoring rules when parameterized by the number~$\tau$ of voter types. 
This contrasts sharply our intractability results for parameterizing by the number or maximum size of parties, as presented in Theorems~\ref{thm:short-coNP-s}--\ref{thm:Veto-W2hard-t}.

\begin{theorem}\label{thm:FPT-short}
    Let $\R$ be a short voting rule, based on a scoring vector of the form     $(a_1,a_2,\dots,a_\ell,0,\dots,0)$ for some $\ell \geq 1$ such that $a_\ell>0$. Then
    \NPP{} for~$\R$ is $\FPT$ when parameterized by~$\tau$, the number of voter types.
\end{theorem}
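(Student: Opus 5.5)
The plan is to guess an opponent~$\opp$, guess the relevant positions of $p$ and $\opp$ in each voter type, and then decide whether some nomination realizes these positions using an integer linear program whose number of variables depends only on~$\tau$. I would solve that program with Lenstra's algorithm (in Kannan's or Frank--Tardos' improved form). Throughout, $\ell$ and the scoring vector are constants.

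\emph{Step 1 (reduction to a fixed opponent and fixed positions).} By the non-unique winner model, $p$ is not a necessary president iff there is a nomination $C'$ and a nominee $\opp\in\oppParty\ne P$ with $\scr[\E_{C'}](\opp)>\scr[\E_{C'}](p)$. Other candidates' scores are irrelevant. I iterate over all $\opp\in C\setminus P$, which gives fewer than $|C|$ choices. Let the voter types be $1,\dots,\tau$, with multiplicities $n_1,\dots,n_\tau$. For every type~$i$ I guess $\pi_i(p),\pi_i(\opp)\in[\ell]\cup\{\bot\}$, the positions of $p$ and $\opp$ among the top~$\ell$ nominees of that type, where $\bot$ means "below position~$\ell$". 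This gives $(\ell+1)^{2\tau}$ guesses. A guess is discarded if it is inconsistent: if both positions are finite, their relative order must agree with $\succ_i$, and they must differ. Each guess fixes both scores: $\scr(p)=\sum_i n_i a_{\pi_i(p)}$ and $\scr(\opp)=\sum_i n_i a_{\pi_i(\opp)}$, with $a_\bot:=0$. I keep only guesses with $\scr(\opp)>\scr(p)$.

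\emph{Step 2 (translating positions into counting constraints).} For type~$i$, the position of $p$ among nominees is $1+\#\{\text{nominees }c\ne p: c\succ_i p\}$, and analogously for $\opp$. The nominees $p$ and $\opp$ are fixed, and whether $\opp\succ_i p$ is known, so their contribution to these counts is known. Thus the guess becomes, for each type~$i$, a requirement on $A_i$, the number of parties $\widetilde P\notin\{P,\oppParty\}$ whose nominee lies above $p$ in~$\succ_i$. The requirement is "$A_i$ equals a specific value" if $\pi_i(p)\ne\bot$, and "$A_i\ge$ a specific value" otherwise. The same holds for $B_i$, the analogous count for $\opp$. Each candidate $c$ in such a party therefore induces a vector $\mathrm{vec}(c)\in\{0,1\}^{2\tau}$ recording, for each~$i$, whether $c\succ_i p$ and whether $c\succ_i\opp$. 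The choice made by a party matters only through the vector of its nominee.

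\emph{Step 3 (ILP with few variables).} I define the type of a party $\widetilde P$ as the set $\{\mathrm{vec}(c):c\in\widetilde P\}\subseteq\{0,1\}^{2\tau}$. There are at most $2^{2^{2\tau}}$ party types; let $m_T$ be the number of parties of type~$T$. For each type $T$ and each $x\in T$ I introduce an integer variable $y_{T,x}\ge 0$, the number of type-$T$ parties nominating a candidate with vector~$x$. The constraints are:
\begin{itemize}
\item $\sum_{x\in T}y_{T,x}=m_T$ for each type~$T$;
\item $A_i=\sum_{T,x}x_i^{p}\,y_{T,x}$ and $B_i=\sum_{T,x}x_i^{\opp}\,y_{T,x}$ for each voter type~$i$, each subject to the equality or lower-bound requirement from Step~2.
\end{itemize}
Any solution yields a nomination, by letting the appropriate number of parties of each type pick a candidate with the prescribed vector, and every nomination yields a solution. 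The number of variables is at most $2^{2\tau}\cdot 2^{2^{2\tau}}$, so Lenstra's algorithm runs in FPT time. The total running time is $|C|\cdot(\ell+1)^{2\tau}\cdot f(\tau)\cdot|I|^{O(1)}$. The algorithm answers "no" iff some guess yields a feasible ILP.

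The main obstacle is Step~2, that is, making sure that the positions of $p$ and $\opp$ really are captured exactly by the counts $A_i,B_i$. This needs correct handling of three things: the $\bot$ case, which becomes a lower bound of at least~$\ell$ nominees above; the mutual contribution of $p$ and $\opp$ to each other's counts; and the parties $P$ and $\oppParty$, which are excluded from the ILP. Once this bookkeeping is right, the rest is a standard bounded-dimension ILP argument.
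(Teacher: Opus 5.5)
Your proof is correct, but it takes a different route from the paper. The paper also guesses $w$. Instead of guessing only the positions of $p$ and $w$, however, it guesses the full \emph{structure} of the reduced election: the partition of $[\tau]\times[\ell]$ recording which top-$\ell$ slots across the voter types are occupied by the same party, together with the classes of $p$ and $w$. It then computes a bipartite matching between the remaining parties and the remaining classes. An edge means the party has a candidate ``well placed'' with respect to $p$ and $w$ for that class. Unmatched parties nominate ``safe'' candidates, ranked below $w$ in every type where $w$ scores.

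The paper's correctness argument is a monotonicity one. The nomination built from the matching need not reproduce the original positions; it only guarantees that $p$'s score cannot increase and $w$'s cannot decrease.

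Your approach replaces this with an exact reduction. The positions of $p$ and $w$ depend only on the counts $A_i,B_i$, and each nominee affects these only through its vector in $\{0,1\}^{2\tau}$. Hence nominations correspond exactly to solutions of a bounded-dimension ILP, and no exchange or monotonicity argument is needed. Even your consistency check in Step~1 is redundant: the vectors already force, for example, $B_i\le A_i$ whenever $w\succ_i p$, so inconsistent guesses yield infeasible programs.

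The price is efficiency and heavier machinery. With up to $4^{\tau}2^{4^{\tau}}$ variables, Lenstra's algorithm gives a triple-exponential dependence on $\tau$, though this is still FPT. The paper achieves $O(|C|^{3.5}(\tau\ell)^{\tau\ell+4.5})$ using only Hopcroft--Karp. In exchange, your formulation is more mechanical to verify and more flexible, since it would readily absorb further linear constraints on the nominees.
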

\begin{proof}
    We present Algorithm~\ref{alg:short-FPT} to solve an instance $I=(\E,\mathcal{P},p)$
    of \NPP\ for~$\R$ in $\FPT$ time with parameter~$\tau$, the number of different voter types in~$\E$.  Let $V=V_1 \cup \dots \cup V_\tau$ be the partitioning of the voters by their types, i.e., all voters in~$V_i$ for some $i \in [\tau]$ have the same preferences over~$C$. 

    Assume that there exists a set~$C' \ni p$ of nominated candidates for which %
    $p$ is \emph{not} a winner in the reduced election~$\E_{C'}$ over~$C'$. 
    Let $P$ be the party containing~$p$.
    Algorithm~\ref{alg:short-FPT} guesses the following information about~$\E_{C'}$:
    \begin{itemize}
        \item 
        A candidate~$\opp \in C'$ whose score in~$\E_{C'}$ exceeds that of~$p$.
        There are $|C \setminus P|$ possibilities to choose~$\opp$.
        \item The \emph{structure} of~$\E_{C'}$, defined as follows.
        We first define a function $\place:[\tau] \times [\ell] \to \P$ as follows.
    For some voter type $i \in [\tau]$ and index $j \in [\ell]$, 
    let $\place(i,j)$ denote the party containing the candidate at the $j^{\textup{th}}$ position of the votes from~$V_i$ in the reduced election~$\E_{C'}$.
    For some pairs  
     $(i,j)$ and $(i',j')$ in~$[\tau] \times [\ell]$,   we write $(i,j) \mysim (i',j')$  if $\place(i,j)=\place(i',j')$. 
    The \emph{structure of~$\E_{C'}$} is  the family~$\Q$ of equivalence classes of the relation~$\mysim$, which is a partitioning of $[\tau] \times [\ell]$. 
    Note that there are at most $(\tau \ell)^{\tau \ell}$ possibilities to choose $\Q$.
        \item The equivalence class~$Q_{\opp} \in \Q$ containing all pairs~$(i,j)$ for which $\place(i,j)=\oppParty $. 
        There are $|\Q| \leq \tau \ell$ possibilities to choose $Q_{\opp}$.
        \item The set $Q_{p}$ containing all pairs~$(i,j)$ for which $\place(i,j)=P$. Note that either $Q_p \in \Q$ or  $Q_{p} = \emptyset$. Therefore, there are $|\Q \setminus \{Q_{\opp}\}|+1 \leq \tau \ell$ possibilities to choose $Q_{p}$.
    \end{itemize}

    After guessing the above described information about~$\E_{C'}$ (where by ``guessing'' we mean trying all possibilities), Algorithm~\ref{alg:short-FPT} proceeds by creating an auxiliary bipartite graph~$G$ as follows. The vertex set of~$G$ is  $\hat{\P} \cup \hat{\Q}$ where $\hat{\P}=\P \setminus \{P,\oppParty \}$ and $\hat{\Q}=\Q \setminus \{Q_p,Q_{\opp}\}$.
    For each $Q \in \Q$, let us define \[Q^+=Q \cup \{(i,\ell+1):i \in [\tau],\not \exists j \in [\ell] \text{ such that } (i,j) \in Q\}.\]
    Intuitively $Q^+$ represents the situation of a party whose nominee (i)~obtains the $j^\textup{th}$ position in the votes from voters of~$V_i$ in~$\E_{C'}$ for each $(i,j) \in Q$ with $j \in [\ell]$, and (ii) obtains a position after the $\ell^{\textup{th}}$ position  in the votes from voters of~$V_i$ in~$\E_{C'}$ for each $(i,\ell+1) \in Q^+$.
    
    A candidate~$c \in C \setminus (P \cup \oppParty ) $
    is \emph{well placed (with respect to~$p$ and~$\opp$) for~$Q \in \hat{\Q}$} if for each $i \in [\tau]$ the following hold:
    \begin{itemize}
        \item[(i)] if $(i,j) \in Q^+$ and $(i,j') \in Q^+_{p}$ for some $j$ and~$j'$, then either $j=j'=\ell+1$ or $c$ precedes~$p$ in the preferences of voters from~$V_i$ if and only if $j<j'$;
        \item[(ii)] if $(i,j) \in Q^+$ and $(i,j'') \in Q^+_{\opp}$ for some $j$ and~$j''$, then either $j=j''= \ell+1$ or $c$ precedes~$\opp$ in the preferences of voters from~$V_i$ if and only if $j<j''$.
    \end{itemize}
    
    We can now define the auxiliary graph~$G$: some party~$\widetilde{P} \in \hat{\P}$ is connected by an edge with some $Q \in \hat{\Q}$ in~$G$ if and only if 
    there exists a candidate~$c \in \widetilde{P}$ that is well placed with respect to~$p$ and~$\opp$ for~$Q$. 
    Given candidate~$c$ and some $Q \in \hat{\Q}$, it can be checked in $O(\tau  \ell)$ time whether $c$ is well placed for~$Q$, so the graph~$G$ can be computed in~$O(|C| \tau^2 \ell^2)$ time.

After computing the graph~$G$, Algorithm~\ref{alg:short-FPT} also computes a set~$\S \subseteq \P$ of 
\emph{secure} parties which are those parties~$S$ that contain at least one candidate~$c_S$ that is \emph{safe}, meaning that for each $i \in [\tau]$ where $(i,\ell+1) \notin Q^+_{\opp}$, voters in~$V_i$ prefer~$\opp$ to~$c_S$; we fix one such candidate~$c_S$ for each secure party~$S$. Intuitively, such candidates can be safely nominated in the sense that they will not prevent~$\opp$ from obtaining the required points in~$\E$.

Next, Algorithm~\ref{alg:short-FPT} computes a matching~$M$ in~$G$ covering all non-secure parties in~$\hat{\P} \setminus \S$ and all equivalence classes in $\hat{\Q}$; if no such matching exists, it discards the current set of guesses. 
Each party~$\widetilde{P} \in \hat{\P}$ that is covered by an edge~$(\widetilde{P},Q) \in M$ nominates a candidate that is well placed for~$Q$, whereas each party~$S$ not covered by an edge of~$M$ (which is necessarily a secure party) nominates the candidate~$c_S$. 
Finally, the algorithm checks whether these nominations together with~$p$ and~$\opp$ yield a reduced election~$\E^\star$ in which $p$ is not a winner; if so, it returns ``no.''
If the algorithm has explored all possible guesses but has not returned ``no'', then it returns ``yes.''

\begin{varalgorithm}{NP-Short}
\caption{Solving \NPP\ for short voting rules.}
\label{alg:short-FPT}
\begin{algorithmic}[1]
\Require{An instance~$(\E,\P,p)$ of \NPP{} over candidate set~$C$ and $p \in P \in \P$.}
\ForAll{%
$\opp \in C \setminus P$}\label{line:sh-pickpp}
 \ForAll{partitioning~$\Q$ of~$[\tau] \times [\ell]$} \Comment{$\Q$: structure of~$\E$}
  \ForAll{$Q_{\opp} \in \Q$ and $Q_p \in (\Q \setminus \{Q_{\opp}\}) \cup \{\emptyset\}$}\label{line:sh-pickQpQp} 
    \State Let $\oppParty $ be the party in~$\P$ containing $\opp$.
    \State Let $\hat{\P}=\P \setminus \{P,\oppParty \}$ and $\hat{\Q}=\Q \setminus \{Q_p,Q_{\opp}\}$.
    \State Let $E=\emptyset$ and $\S=\emptyset$.
    \ForAll{$c \in P_c \in \hat{\P}$ and $Q \in \hat{\Q}$}
        \If{$c$ is well placed for~$Q$} add $(P_c,Q)$ to $E$.
        \EndIf
    \EndFor
    \State Create the graph $G=(\hat{\P} \cup \hat{\Q},E)$.
    \ForAll{$S \in \hat{\P}$} 
        \If{$\exists$ a safe candidate $c_S \in S$} add $S$ to~$\S$. 
        \EndIf
    \EndFor
    \If{$\exists$ a matching~$M$ in~$G$ covering~$\hat{\Q} \cup  (\hat{\P} \setminus \S)$}\label{line:sh-computematching}
        \State Set $C_M=\{p,\opp\}$.\label{line:sh-initC} 
        \ForAll{$(\widetilde{P},Q) \in M$} 
            \State add to~$C_M$ a candidate of~$\widetilde{P}$ well placed for~$Q$.
        \EndFor
        \ForAll{$S \in \S$ not covered by~$M$} add $c_S$ to~$C_M$.\label{line:sh-finishC} 
        \EndFor
        \If{$p$ is not a winner in $\E_{C_M}$}  {\bf return} ``no''. \label{line:sh-returnyes}
        \EndIf
    \EndIf
  \EndFor
 \EndFor
\EndFor
\State {\bf return} ``yes''. \end{algorithmic}
\end{varalgorithm}

Note that there are at most $%
|C| \cdot (\tau \ell)^{\tau \ell+2}$ possibilities for picking~$%
\opp,\Q,Q_p,$ and~$Q_{\opp}$. 
Once these guesses are fixed, the bipartite auxiliary graph~$G$ can be computed in~$O(|C| \tau^2  \ell^2)$ time as we have already argued. The bottleneck in each iteration is the computation of the matching on line~\ref{line:sh-computematching} of Algorithm~\ref{alg:short-FPT}. Since $G$ has at most $|\P|+\tau \ell\leq |C|+\tau\ell$ vertices, we can compute a  matching in~$G$ that covers the required set of vertices in $O((|C|+\tau \ell)^{2.5})$ time by, e.g., the Hopcroft--Karp algorithm~\cite{HopcroftKarp73}. %
This yields a total running time of $O(%
|C|^{3.5} \cdot (\tau \ell)^{\tau \ell+4.5})$ which is fixed-parameter tractable with respect to~$\tau$, as $\ell$ is a constant.

\medskip
Let us show the correctness of Algorithm~\ref{alg:short-FPT}. It is clear that whenever Algorithm~\ref{alg:short-FPT} returns ``no'', then it does so correctly, because for a set~$C_M$ of nominated candidates containing~$p$ as well as exactly one candidate for each party in~$\P\setminus \{P\}$, candidate~$p$ is not a winner of the reduced election over~$C_M$ and thus not a necessary president.

Hence, it remains to prove that whenever the input is a ``no''-instance of \NPP, Algorithm~\ref{alg:short-FPT} returns ``no.'' Let $C'$ be the set of nominees in the reduced election~$\E_{C'}$ where $P$ is not a winner. 
Then there exists  some candidate~$\opp$ with $\scr(\opp)>\scr(p)$. %
Let also $\oppParty  \in \P$ contain $\opp$. 
Let $\Q$ be the structure of~$\E$, and let $Q_{\opp} \in \Q$ and $Q_p \in \Q \cup \{\emptyset\}$ be defined as before. 
Consider the iteration~$\iota$ when the algorithm chooses candidate~$\opp$, the structure~$\Q$, and the sets~$Q_p$ and~$Q_{\opp}$ on lines~\ref{line:sh-pickpp}--\ref{line:sh-pickQpQp}. 

Observe that for each $(i,j) \in Q$ for some $Q  \in \hat{\Q}$, the candidate~$c$ that appears on the $j^{\textup{th}}$ position of the votes from~$V_i$ in~$\E_{C'}$ for some $j \in [\ell]$ is well placed for~$Q$ (due to the definition of the structure~$\Q$ of~$\E_{C'}$), and thus the party nominating~$c$ is connected to~$Q$ by an edge in the auxiliary graph~$G$. 
Let $M^\star$ denote the set of all such edges; then $M^\star$ is a matching in~$G$ that covers every equivalence class in~$\hat{\Q}$. 
Furthermore, notice that every party that receives a total score of~$0$ in~$\E$ has a nominee in~$\E_{C'}$ that is safe. 
Therefore, the matching~$M^\star$ covers all non-secure parties. Consequently, Algorithm~\ref{alg:short-FPT} will find on line~\ref{line:sh-computematching} that there exists \emph{some} matching~$M$ covering $\hat{\Q}  \cup (\hat{\P} \setminus \S)$.

We now show that $p$ cannot be a winner in the reduced election~$\E_{C_M}$ over the set~$C_M$ of candidates computed by Algorithm~\ref{alg:short-FPT} on lines~\ref{line:sh-initC}--\ref{line:sh-finishC}.
To see this, first consider some~$i \in [\tau]$ for which $p$ obtains $a_j>0$ points due to each voter in~$V_i$ in $\E_{C'}$. Let $Q_1,\dots,Q_{j-1} \in \Q$ be the equivalence classes that contain the pairs~$(i,1),\dots,(i,j-1)$, respectively. First, if $Q_h=Q_{\opp}$ for some~$h \in [j-1]$, then $\opp$ precedes~$p$ in the preferences of voters in~$V_i$, by the definition of~$Q_p$, $Q_{\opp}$, and the guesses within  iteration~$\iota$. Furthermore, for each $h \in [j-1]$, matching~$M$ contains some edge~$(P_h,Q_h)$ covering~$Q_h$, and thus $C_M$ contains a candidate~$c_h$ nominated by~$P_h$ that is well placed for~$Q_h$. By definition, this means that $c_h$ precedes~$p$ in~$\E_{C_M}$. Since this holds for each $h \in [j-1]$, we get that $\scr[\E_{C_M}](p)\leq \scr[\E_{C'}](p)$. 

Analogously, consider some $i \in [\tau]$ for which $\opp$ obtains $a_j>0$ points due to each voter in~$V_i$  in $\E_{C'}$. Let $Q_1,\dots,Q_{j-1} \in \Q$ be the equivalence classes that contain the pairs~$(i,1),\dots,(i,j-1)$, respectively. On the one hand, if some candidate~$c$ is nominated by some secure party in~$\E_{C_M}$, then $c$ is safe, meaning that voters in~$V_i$ prefer~$\opp$ to~$c$ (because $(i,j) \in Q_{\opp}$). On the other hand, if $c$ is nominated by some party~$\widetilde{P}$ that is covered by some edge~$(\widetilde{P},Q)$ of~$M$ for some~$Q \in \hat{\Q}$, then either $Q \in \{Q_1,\dots,Q_{j-1}\}$ or $c$ does not precede $\opp$ in the preferences of voters in~$V_i$, because $c$ is well placed for~$Q$. This means that $\opp$ is preceded by at most $j-1$ nominees in~$\E_{C_M}$, implying $\scr[\E_{C_M}](\opp)\geq \scr[\E_{C'}](\opp)$.

By the previous two paragraphs, we obtain \[\scr[\E_{C_M}](p)\leq \scr[\E_{C'}](p) < \scr[\E_{C'}](\opp) \leq \scr[\E_{C_M}](\opp),\] 
so $p$ is not a winner in~$\E_{C_M}$, as promised. %
Hence, Algorithm~\ref{alg:short-FPT} returns ``no'' on line~\ref{line:sh-returnyes} in iteration~$\iota$, proving the correctness of the algorithm.

\end{proof}

We next present an algorithm that solves \NPP{} for Veto-like scoring rules in FPT time when parameterized by~$\tau$.
\begin{theorem}\label{thm:FPT-veto}
    Let $\R$ be a Veto-like voting rule, based on a scoring vector of the form  $(a,\dots,a,a_1,a_2,\dots,a_\ell)$ for some constant ${\ell \geq 1}$ where $a>a_1$. Then 
  \NPP{} for~$\R$ is $\FPT$ when parameterized by~$\tau$, the number of voter types.
\end{theorem}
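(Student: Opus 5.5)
We mirror the proof of Theorem~\ref{thm:FPT-short}, now looking at the \emph{last}~$\ell$ positions of each vote in the reduced election, since under a Veto-like rule only these positions determine score differences. If $p$ is not a necessary president, we fix a reduced election~$\E_{C'}$ and a nominee~$\opp$ with $\scr[\E_{C'}](\opp)>\scr[\E_{C'}](p)$. The algorithm guesses $\opp\in C\setminus P$ together with the \emph{bottom structure} of~$\E_{C'}$. To define it, let $\place(i,j)$ be the party whose nominee occupies the $j^{\textup{th}}$ position from the bottom in votes of~$V_i$, for $(i,j)\in[\tau]\times[\ell]$. The bottom structure is the partition~$\Q$ of $[\tau]\times[\ell]$ into classes of equal $\place$-value. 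We also guess the classes $Q_{\opp}$ and $Q_p$ (each may be empty). This gives at most $|C|(\tau\ell)^{\tau\ell+2}$ guesses. We define $Q^+$ as before, where index $\ell+1$ now means ``above the bottom~$\ell$ positions''. A candidate~$c$ is \emph{well placed} for~$Q\in\hat{\Q}$ if its relative order with $p$ and with $\opp$ in each~$V_i$ agrees with the guessed positions, exactly as in conditions (i)--(ii) with the direction reversed.

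The roles of $p$ and $\opp$ in the matching are mirrored relative to the short case. There, ``safe'' candidates were those that do not hurt~$\opp$ by appearing at the top. Here, a nominee that lands in no bottom class must lie above position $\ell$ from the bottom in every~$V_i$. Such a nominee can hurt~$p$ only if it pushes $p$ upward, and pushing $p$ upward only increases $p$'s score. So we call $c$ \emph{safe} if, for each $i$ with $(i,\ell+1)\notin Q_p^+$, voters in~$V_i$ prefer~$c$ to~$p$. Secure parties are those containing a safe candidate. We build the bipartite graph~$G$ on $\hat{\P}\cup\hat{\Q}$ as before and look for a matching covering $\hat{\Q}\cup(\hat{\P}\setminus\S)$. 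Matched parties nominate a well-placed candidate, unmatched secure parties nominate their safe candidate, and we test whether $p$ loses in~$\E_{C_M}$. The running time matches Theorem~\ref{thm:FPT-short}, namely $(\tau\ell)^{O(\tau\ell)}\cdot|C|^{O(1)}$.

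For correctness, soundness of a ``no'' answer is immediate. For completeness, fix the iteration matching $\E_{C'}$. The edges given by $\E_{C'}$ itself form a matching covering all of~$\hat{\Q}$. Every party not matched by these edges has its $\E_{C'}$-nominee outside all bottom positions, so that nominee precedes $p$ whenever $p$ is among the bottom $\ell$. Hence it is safe, and the matching covers all non-secure parties.

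The main step, where I expect the real care is needed, is the score comparison in $\E_{C_M}$. We must show two things. \textbf{First}, $p$ has at least as many nominees below it in~$\E_{C'}$ as in~$\E_{C_M}$ within each type~$V_i$, so $p$ sits at least as high in $\E_{C'}$ and hence $\scr[\E_{C_M}](p)\le\scr[\E_{C'}](p)$. \textbf{Second}, $\opp$ has at least as many nominees below it in~$\E_{C_M}$ as in~$\E_{C'}$, so $\scr[\E_{C_M}](\opp)\ge\scr[\E_{C'}](\opp)$.

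For $p$ with $(i,j)\in Q_p$, the nominees below $p$ in $\E_{C_M}$ are of three kinds. Well-placed candidates for classes containing $(i,j')$ with $j'<j$ lie below $p$, which matches $\E_{C'}$. Well-placed candidates for other classes lie above~$p$. Safe candidates are, by definition, preferred to~$p$. If $(i,\ell+1)\in Q_p^+$, then $p$ already gets~$a$ in~$V_i$, which is the maximum possible. For $\opp$, each class below~$\opp$ in $V_i$ is matched to a candidate that is well placed and therefore lies below~$\opp$, so $\opp$ keeps at least that many nominees beneath it. Since the scoring vector is nonincreasing, both monotonicity claims follow. This gives the chain $\scr[\E_{C_M}](p)\le\scr[\E_{C'}](p)<\scr[\E_{C'}](\opp)\le\scr[\E_{C_M}](\opp)$.
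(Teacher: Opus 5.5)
Your proposal is correct, but it takes a different route from the paper. The paper does not mirror the matching argument. Instead it splits on the number of parties:
\begin{itemize}
\item If $|\P|\le \ell\tau$, it regards $\R$ as a short rule with $\ell'=|\P|$ tracked positions, so every position of a reduced vote is tracked. It then runs Algorithm~\ref{alg:short-FPT} as a black box, at the cost of a $(\tau^2\ell)^{O(\tau^2\ell)}$ factor.
\item If $|\P|>\ell\tau$, there are at most $\ell\tau$ bottom slots across the $\tau$ voter types. So in every reduced election some nominee avoids all of them and attains the maximum score $|V|\cdot a$. Hence $p$ fails to win if and only if some voter can be made to rank $p$ among its last $\ell$ nominees, which is the polynomial-time check~(\ref{eq:vetolike-condition}).
\end{itemize}

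Your direct adaptation via the bottom structure avoids this case split and gives the better dependence $(\tau\ell)^{O(\tau\ell)}$. You also correctly identified the one non-mechanical change: safety must now be defined relative to $p$ rather than $\opp$. Under a Veto-like rule an extra nominee can only spoil the exchange argument by landing below $p$, while extra nominees below $\opp$ only help.

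The price is that you must re-verify the exchange argument yourself. Two small points need stating there:
\begin{itemize}
\item When $Q_p$ is one of the classes below $\opp$ in some $V_i$, the corresponding nominee is $p$ itself, whose order relative to $\opp$ is fixed by the original votes. Symmetrically, when $Q_{\opp}$ is one of the classes below $p$, the nominee is $\opp$ itself.
\item The degenerate case $|\P|\le\ell$ needs handling. You can restrict to $[\tau]\times[\min(\ell,|\P|)]$, or brute-force the at most $|C|^{\ell}$ nominations.
\end{itemize}

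The paper's route is shorter and reuses Theorem~\ref{thm:FPT-short} unchanged. It also shows that the problem is polynomial-time solvable whenever there are more than $\ell\tau$ parties.
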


\begin{proof}
Let $(\E,\P,p)$ be the input instance of \NPP.
First, if $|\P| \leq \ell \tau$, then we consider $\R$ as an $\ell'$-short voting rule for $\ell'=|\P|\leq \tau \ell$ 
and apply Algorithm~\ref{alg:short-FPT}. As shown in the proof of Theorem~\ref{thm:FPT-short}, the running time is $O(%
|C|^{3.5} \cdot (\tau^2 \ell)^{\tau^2 \ell+4.5})$ which is fixed-parameter tractable for~$\tau$ because $\ell$ is a constant.

Second, if $|\P|>\ell \tau$, then there will be at least one nominated candidate in each reduced election~$\E'$ that achieves the maximum possible score of $|V| \cdot a$.
This means that $p$ is not a winner in~$\E'$ if and only if $p$ has score less than~$|V| \cdot a$, i.e., it is ranked among the $\ell$ least favorite nominees for some voter. 
To decide whether this is possible, it suffices the check %
whether 
\begin{equation}
\label{eq:vetolike-condition}
    |\{\widetilde{P} \in \P: c \succ_v p \text{ for some } c \in \widetilde{P}\} | \geq |\P|- \ell
\end{equation} 
for some voter~$v \in V$. 
Clearly, this can be checked in polynomial  time.
If Inequality~(\ref{eq:vetolike-condition}) holds for some $v \in V$, then we return ``no'', otherwise we return ``yes''. 

To see the correctness of this algorithm,
assume that Inequality~(\ref{eq:vetolike-condition})    holds for some voter $v \in V$. 
Then nominating~$p$ %
and nominating a candidate preferred to~$p$ for every other party where this is possible, we get that $p$ will be preceded by at least~$|\P|-\ell$ nominees in the preferences of~$v$ in the resulting election~$\E'$. Hence, $v$ allocates less than~$a$ points to~$p$, yielding $\scr[\E'](p)<|V| \cdot a$, which in turn implies that $p$ is not a winner in~$\E'$.
Assume now that Inequality~(\ref{eq:vetolike-condition}) does not hold for any  voter~$v \in V$. Then irrespective of the nominations from the parties, for each voter~$v \in V$ there will exist less than $|\P|-\ell$ parties whose nominee is preferred by~$v$ to~$p$. Thus, $p$ receives $a$ points from each voter, yielding a total score of $|V| \cdot a$, the maximum obtainable score, which ensures that $p$ is a winner in the resulting election.
\end{proof}

\section{Results for Condorcet-Consistent Voting Rules}
\label{sec:condorcet}

In this section we deal with three Condorcet-consistent voting rules. We show that \NPP{} is polynomial-time solvable for Copeland$^\alpha$ as well as for Maximin. By contrast, \NPP{} for Ranked Pairs is computationally hard even for a constant number of voters.

\begin{theorem}
\label{thm:Copeland} 
 For each $\alpha \in [0,1]$, \NPP{} for Copeland$^\alpha$ is polynomial-time solvable.
\end{theorem}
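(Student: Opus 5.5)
The plan is to reuse the ``guess the rival, then optimize greedily per party'' scheme of Algorithm~\ref{alg:Borda}. What makes it work here is that Copeland$^\alpha$ scores decompose into pairwise terms that do not depend on the rest of the nomination. For any two candidates $c,c'\in C$ and any reduced election $\E_{C'}$ with $c,c'\in C'$, we have $N_{\E_{C'}}(c,c')=N_\E(c,c')$, because restricting a voter's linear order to $C'$ preserves the relative order of $c$ and $c'$. Hence $\Cpl[\E_{C'}](c,c')=\Cpl[\E](c,c')$, and
\[
\Cpl[\E_{C'}](c)=\sum_{c'\in C'\setminus\{c\}}\Cpl[\E](c,c').
\]
All values $\Cpl[\E](c,c')$ can be precomputed in $O(|V|\cdot|C|^2)$ time.

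Let $P\ni p$ be the party of~$p$. The algorithm iterates over every candidate $\opp\in C\setminus P$, with party $\oppParty\ni\opp$. For each candidate $c\in C\setminus(P\cup\oppParty)$ it defines
\[
\Delta(c)=\Cpl[\E](\opp,c)-\Cpl[\E](p,c).
\]
It then builds $\widetilde C=\{p,\opp\}$ and adds, for each party $\widetilde P\in\P\setminus\{P,\oppParty\}$, a candidate maximizing $\Delta$ within $\widetilde P$. If $p$ is not a winner of $\E_{\widetilde C}$, the algorithm returns ``no''. If no guess of~$\opp$ yields ``no'', it returns ``yes''.

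Any ``no'' answer is correct, since $\widetilde C$ is a valid nomination in which $p$ loses. For the converse, suppose some nomination $C'\ni p$ makes $p$ a non-winner. Then some $\opp\in C'$ satisfies $\Cpl[\E_{C'}](\opp)>\Cpl[\E_{C'}](p)$. By the decomposition above,
\[
\Cpl[\E_{C'}](\opp)-\Cpl[\E_{C'}](p)=\Cpl[\E](\opp,p)-\Cpl[\E](p,\opp)+\sum_{c\in C'\setminus\{p,\opp\}}\Delta(c).
\]
This expression is a constant plus a sum of independent per-party contributions. Replacing each nominee in $C'\setminus\{p,\opp\}$ by the $\Delta$-maximizer of its party therefore does not decrease it. So in the iteration where the algorithm guesses this particular~$\opp$, we get $\Cpl[\E_{\widetilde C}](\opp)>\Cpl[\E_{\widetilde C}](p)$, and the algorithm answers ``no''.

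The only point that needs care, and which I regard as the main (mild) obstacle, is the bookkeeping of the direct comparison between $p$ and $\opp$. The term $\Cpl[\E](\opp,p)$ contributes to $\opp$'s score and $\Cpl[\E](p,\opp)$ to $p$'s score; both are fixed once $\opp$ is fixed, so they do not affect the greedy choice. Ties with value $\alpha$ need no special handling, since all quantities are fixed reals.

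The running time is $O(|V|\cdot|C|^2)$ for the precomputation. Each of the fewer than $|C|$ guesses then costs $O(|C|)$ to build $\widetilde C$ plus $O(|C|^2)$ to compute the winners of $\E_{\widetilde C}$. This is polynomial for every $\alpha\in[0,1]$.
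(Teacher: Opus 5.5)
Your proposal is correct and is essentially the paper's argument. You guess the rival $\opp$ and write $\Cpl[\E_{C'}](\opp)-\Cpl[\E_{C'}](p)$ as the fixed head-to-head term plus independent per-nominee terms $\Cpl(\opp,c)-\Cpl(p,c)$. You then nominate a per-party maximizer of these terms, which is exactly the paper's Algorithm NP-Copeland with its values $\Delta_c(\opp,p)$.
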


\begin{proof}
We propose Algorithm~\myref{alg:Copeland} that solves  \NPP\ for Copeland$^\alpha$ voting in polynomial time.
Let $(\E,\P,p)$ be our input instance with election $\E=(C,V,\{\succ_v\}_{v \in V})$ and party~$P$ containing~$p$. 

\begin{varalgorithm}{NP-Copeland}
\caption{Solving \NPP\ for Copeland$^\alpha$ for some $\alpha \in [0,1]$.}
\label{alg:Copeland}
\begin{algorithmic}[1]
\Require{An instance~$(\E,\P,p)$ of \NPP{} with candidate set~$C$ and $p \in P \in \P$.}
\ForAll{%
$p' \in C \setminus P$}
    \State Let $P'$ be the party in~$\P$ containing~$p'$.
    \ForAll{$c \in C \setminus (P \cup P')$} compute $\Delta_c(p',p)$. %
	\EndFor
    
    \State Set $\widetilde{C}=\{p,p'\}$.
    \ForAll{$\widetilde{P} \in \P \setminus \{P,P'\}$}
        \State Add a candidate $c(\widetilde{P})\in \arg\max_{c \in \widetilde{P}} \Delta_c(p'p)$ 
        to~$\widetilde{C}$.
    \EndFor
     \If{$p$ is not a winner in $\E_{\widetilde{C}}$}  {\bf return} ``no''. 
          
     \EndIf
\EndFor
\State {\bf return} ``yes''. \end{algorithmic}
\end{varalgorithm}

Assume that there exists a set~$C' \ni p$ of nominated candidates for which $p$ is \emph{not} a winner in the reduced election~$\E_{C'}$ over~$C'$. 
Algorithm~\myref{alg:Copeland} first guesses %
a candidate~$\opp \in  C' \setminus P$ such that $\Cpl[\E_{C'}](\opp)>\Cpl[\E_{C'}](p)$. 
Let $\oppParty $ be the party containing~$\opp$.

Recall that $\Cpl[\E_{C'}](c,c')$ for two candidates $c$ and~$c'$ is the score received by~$c$ resulting from the head-to-head comparison of~$c$ with~$c'$ in $\E_{C'}$, see its definition in Equation~(\ref{eq:define_Copeland}).
Let us now define
$\Delta(\opp,p)=\Cpl[\E_{C'}](\opp,p)-\Cpl[\E_{C'}](p,\opp)$; note that this value is the same for all reduced elections~$\E$ containing both $p$ and~$\opp$, and  
can be computed easily from the election $\E$. 
Then
the difference in the Copeland$^\alpha$ score of $\opp$ and~$p$ in~$\E_{C'}$ 
can be expressed as 
\begin{equation} 
\label{eq:Copeland-diff-score}
\Cpl[\E_{C'}](\opp)-\Cpl[\E_{C'}](p) = 
\Delta(\opp,p)+\sum_{c \in C' \setminus \{p,\opp\}} \Delta_c(\opp,p)
\end{equation}
where 
$\Delta_c(\opp,p)=\Cpl[\E_{C'}](\opp,c)-\Cpl[\E_{C'}](p,c)$ for each candidate $c \in C' \setminus \{p,\opp\}$ that is nominated in~$\E$. Hence,
the value 
$\Delta_c(\opp,p)$ reflects the difference resulting in the score of~$\opp$ and~$p$ from their comparison with some candidate $c \in C'\setminus \{p,\opp\}$. Notice that $\Delta_c(\opp,p)$ is the same for all reduced elections of~$\E$ containing $p$, $\opp$, and $c$, and can be calculated according to the following cases:
\begin{itemize}
    \item[$(i)$] if $\opp$ defeats $c$ and $c$ defeats  $p$, then  $\Delta_c(\opp,p)=1$;
    \item[$(ii)$] if $\opp$ defeats $c$ and $p$ is tied with $c$, then  $\Delta_c(\opp,p)=1-\alpha$;
    \item[$(iii)$] if $\opp$ is tied with $c$ and $p$ is defeated by $c$, then  $\Delta_c(\opp,p)=\alpha$;
     \item[$(iv)$] if $\opp$ is tied with $c$ and $p$ defeats $c$, then  $\Delta_c(\opp,p)=\alpha-1$;
     \item[$(v)$] if $\opp$ is defeated by $c$ and $p$ is tied with $c$, then  $\Delta_c(\opp,p)=-\alpha$;
    \item[$(vi)$] if $\opp$ is defeated by $c$ and $p$ defeats $c$, then  $\Delta_c(\opp,p)=-1$;
    \item[$(vii)$] if both $p$ and $\opp$ defeat $c$, or are both tied with $c$, or are both defeated by $c$, then $\Delta_c(\opp,p)=0$.
\end{itemize}

After guessing %
$\opp$,  Algorithm~\myref{alg:Copeland} nominates a candidate~$c_{\widetilde{P}} \in \widetilde{P}$ 
for each party~$\widetilde{P} \in \P \setminus \{P,\oppParty \}$ 
maximizing~$\Delta_c(\opp,p)$ over~$\widetilde{P}$.
Finally,  the algorithm computes %
$\Cpl[\E_{\widetilde{C}}](\opp)-\Cpl[\E_{\widetilde{C}}f](p)$ according to Equation~(\ref{eq:Copeland-diff-score}) for the obtained reduced election~$\E_{\widetilde{C}}$; if this value is positive, it returns ``no.'' 
If the algorithm has explored all possible guesses for~$\opp$ without returning ``no'', then
it returns ``yes.''

Let us prove the correctness of Algorithm~\myref{alg:Copeland}.
It is clear that whenever Algorithm~\myref{alg:Copeland} returns ``no'', then it does so correctly, because  $\Cpl[\E_{\widetilde{C}}](\opp)> \Cpl[\E_{\widetilde{C}}](p)$ and so candidate $p$ is not a winner of the reduced election~$\E_{\widetilde{C}}$ over the candidate set~$\widetilde{C}$ constructed by the algorithm. %

Conversely, assume that a candidate $\opp$ %
is a winner of some reduced election $\E_{C'}$  over a set~$C' \ni p$ of nominated candidates in which $p$ is not a winner. Let $\oppParty $ be the party containing~$\opp$, and let $c_{\widetilde{P}}$ denote the nominee of some party~$\widetilde{P}$ other than $P$ or~$\oppParty $ in~$\E_{C'}$.
Since $\opp$ is a winner in~$\E_{C'}$ but $p$ is not, we have $\Cpl[\E_{C'}](\opp)> \Cpl[\E_{C'}](p)$.
Using Equation~(\ref{eq:Copeland-diff-score}) and the algorithm's choice for the nominated candidates, we get that
\begin{align*}
\mathsf{Cpl}_{\E_{\widetilde{C}}}^\alpha&(\opp)- \mathsf{Cpl}_{\E_{\widetilde{C}}}^\alpha(p)
=
\Delta(\opp,p)+\sum_{\widetilde{P} \in \widetilde{\P}} \max \{\Delta_c(\opp,p):c \in \widetilde{P}\}
\\
& \!\!\!\! \geq 
\Delta(\opp,p)+\sum_{\widetilde{P} \in \widetilde{\P}} \Delta_{c_{\widetilde{P}}}(\opp,p)
= \Cpl[\E_{C'}](\opp)-\Cpl[\E_{C'}](p)>0
\end{align*}
for $\widetilde{\P}=\P \setminus \{P,\oppParty \}$.
Thus, Algorithm~\myref{alg:Copeland} returns ``no.'' 

To evaluate the computational complexity of Algorithm~\myref{alg:Copeland},
note that there are at most $|C|%
$ possibilities to choose%
~$\opp$. The values~$\Delta_c(\opp,p)$ can be computed in time $O(|V| \cdot |C|)$ 
which also suffices to find the nominated candidates and compute the resulting Copeland$^\alpha$ scores of~$p$ and~$\opp$.
Hence the total running time of Algorithm~\myref{alg:Copeland} is $O(|C|^2 \cdot |V|)$.
\end{proof}

\begin{theorem}
\label{thm:Maximin} 
{\sc Necessary President} for Maximin is polynomial-time solvable.
\end{theorem}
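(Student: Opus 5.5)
Our plan is to mirror the guessing strategy of Algorithms~\ref{alg:Borda} and~\myref{alg:Copeland}, adapted to the min-structure of Maximin. Pairwise counts $N(c,c')$ do not depend on which other candidates are nominated, so the Maximin score of a nominee in $\E_{C'}$ is the minimum of $N(c,c')$ over the other nominees $c'$. Thus $p$ fails to win in $\E_{C'}$ exactly when some nominee $\opp$ satisfies $\MM_{\E_{C'}}(\opp)>\MM_{\E_{C'}}(p)$.

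We first guess the candidate $\opp\in\oppParty\neq P$, the nominee witnessing that $p$ loses. We also guess the candidate $q\in C\setminus P$ that attains $p$'s minimum, together with the threshold $r=N(p,q)$; both $q=\opp$ and $q\in\oppParty$ with $q\ne\opp$ are impossible unless $q=\opp$. There are $O(|C|^2)$ such guesses, and $r$ is then determined. The conditions we want are:
\begin{itemize}
\item $N(\opp,p)>r$;
\item $N(\opp,q)>r$ if $q\neq\opp$;
\item every further nominee $c$ satisfies $N(\opp,c)>r$.
\end{itemize}
Lowering $p$'s score only helps, so we may take $p$'s score to be at most $r$ because $q$ is present; we do not need to enforce that $q$ is exactly the minimizer.

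Next we fill in the remaining parties. Party $Q\ni q$ nominates $q$ (if $q\ne\opp$). For each other party $\widetilde P\in\P\setminus\{P,\oppParty,Q\}$ we need some $c\in\widetilde P$ with $N(\opp,c)>r$. This check is independent across parties, so if it succeeds for every party we return ``no''; otherwise we discard the guess. If no guess succeeds, we return ``yes''.

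For correctness, suppose $C'\ni p$ is a nomination in which $p$ does not win. Take a nominee $\opp$ of maximum score, let $q\in C'$ attain $p$'s minimum, and set $r=\MM_{\E_{C'}}(p)=N(p,q)$. Every nominee $c\ne\opp$ then satisfies $N(\opp,c)\ge\MM_{\E_{C'}}(\opp)>r$, so the corresponding guess succeeds. Conversely, any successful guess yields a set $\widetilde C$ with $\MM_{\E_{\widetilde C}}(\opp)>r\ge N(p,q)\ge\MM_{\E_{\widetilde C}}(p)$.

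The main subtlety, and the step to check most carefully, is the case $q\in\oppParty$. There $q$ must equal $\opp$, and we need to ensure that the guess of $q$ does not force two nominees from one party. The total running time is polynomial, roughly $O(|C|^3+|C|^2|V|)$.
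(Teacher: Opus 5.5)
Your proposal is correct and is essentially the paper's algorithm. You guess $\opp$ and a candidate $q$ pinning $p$'s Maximin score at $r=N(p,q)$, check $N(\opp,p)>r$ and $N(\opp,q)>r$, then verify independently for each remaining party that it contains some $c$ with $N(\opp,c)>r$, all in $O(|C|^2(|C|+|V|))$ time. One point in your favour: the paper lets $\hat{c}$ range only over $C\setminus(P\cup\oppParty)$, so it misses instances where $p$'s score is attained only against $\opp$ (e.g., just the two parties $\{p\}$ and $\{\opp\}$ with $\opp$ defeating $p$); your explicit allowance of $q=\opp$ covers this case. That is the intended meaning of your garbled sentence about $q\in\oppParty$, namely that $q\in\oppParty$ is allowed only when $q=\opp$.
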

\begin{proof}
We propose an algorithm that solves  \NPP\ for Maximin in polynomial time; see Algorithm~\ref{alg:Maximin} for a pseudocode. 
Let $(\E,\P,p)$ be our input instance with election $\E=(C,V,\{\succ_v\}_{v \in V})$ and party~$P$ containing~$p$. 

Assume that there exists a reduced election~$\E'$ where $p$ is nominated but is not a winner. Then there exists a nominee~$\opp$ in~$\E'$ whose Maximin score exceeds the Maximin score of~$p$. Let $\oppParty $ be the party containing $\opp$.
Algorithm \ref{alg:Maximin} first guesses candidate~$\opp$ as well as a candidate~$\hat{c} \in C \setminus (P \cup \oppParty )$ that determines the Maximin score of~$p$ in~$\E'$, i.e., for which $\MM_{\E'}(p)=N(p,\hat{c})$. Let $\hat{s}=N(p,\hat{c})$ and $\hat{P}$ be the party containing $\hat{c}$.\footnote{Notice that the value $N_{\E'}(c,c')$ for any pair of candidates $c,c'$ is the same irrespective of other nominations and so also of the reduced election, therefore in the rest of this \new{paper we shall omit the index indicating the (reduced) election.}}%

First, the algorithm checks whether its guesses are valid in the sense that hold %
$N(\opp,\hat{c})>\hat{s}$ and $N(\opp,p)>\hat{s}$ (conditions necessary for $\MM_\E(\opp)>\hat{s}$). 
Next, the algorithm searches for a suitable nominee~$c_{\widetilde{P}}$ for each party~$\widetilde{P} \in \mathcal{P} \setminus \{P,\oppParty ,\hat{P}\}$ that satisfies 
$N(\opp,c_{\widetilde{P}}) > \hat{s}$.
If such a candidate is found for each party other than~$P$, $\oppParty $, and~$\hat{P}$, then \ref {alg:Maximin} returns ``no.''
If all guesses are exhausted but the algorithm has not output ``no'', then it returns ``yes.''

\begin{varalgorithm}{NP-Maximin}
\caption{Solving \NPP\ for Maximin.}
\label{alg:Maximin}
\begin{algorithmic}[1]
\Require{An instance~$(\E,\P,p)$ of \NPP\ with candidate set~$C$.}
\ForAll{%
$\opp \in C \setminus P$}
    \State Let $\oppParty $ be the party in~$\P$ containing~$\opp$.
    \ForAll{$\hat{c} \in C \setminus (P \cup \oppParty )$}
        \State Let $\hat{s}=N(p,\hat{c})$ and let $\hat{P}$ be the party containing $\hat{c}$.
        \If{$N(\opp,\hat{c})>\hat{s}$ and $N(\opp,p)>\hat{s}$}\label{line:Mx-valid}
            \If{$\forall \widetilde{P} \in \mathcal{P} \setminus \{P, \oppParty ,\hat{P}\} \,\, \exists c \in \widetilde{P}: N(\opp,c)>\hat{s}$}\label{line:Mx-nominees} 
                \State {\bf return} ``no''.\label{line:Mx-yes}
            \EndIf
        \EndIf
	\EndFor
\EndFor
\State {\bf return} ``yes''. \end{algorithmic}
\end{varalgorithm}

To show the correctness of \ref{alg:Maximin}, observe first that assuming correct guesses, 
the conditions checked on line~\ref{line:Mx-valid} must hold for $p$, $\opp$, and $\hat{c}$, because by our definitions we have 
\[\hat{s}=N(p,\hat{c})=\MM_{\E'}(p)<\MM_{\E'}(\opp)
\leq \min\{N(\opp,p),N(\opp,\hat{c})\}.\]

Similarly, for each nominee~$c$ in~$\E'$ other than these three candidates, we know $\hat{s}=\MM_{\E'}(p)<\MM_{\E'}(\opp)\leq N(\opp,c)$. Therefore, the algorithm will find on line~\ref{line:Mx-nominees} that there exists a candidate $c$ satisfying the requirement $N(\opp,c)>\hat{s}$ in each party not containing $p,\opp$, or $\hat{c}$. Hence, \ref{alg:Maximin} will return ``no'' on line~\ref{line:Mx-yes}.

For the other direction, assume that \ref{alg:Maximin} returns ``no'' in some iteration. Consider the reduced election obtained where $P$, $\oppParty $, and $\hat{P}$ nominate the candidates $p$, $\opp$, and $\hat{c}$, respectively, guessed in this iteration, 
while each remaining party $\widetilde{P}$ nominates a candidate $c_{\widetilde{P}}$ that satisfies $N(\opp,c_{\widetilde{P}})>\hat{s}$.
Note that such candidates exist because the algorithm found the condition on line~\ref{line:Mx-nominees} to hold in the  iteration when it returned ``no.'' 
Hence, this method indeed yields a reduced election~$\E_{C'}$ over some candidate set~$C'$. Clearly, our assumptions on the nominees  ensure $\MM_{\E_{C'}}(\opp)=\min_{c \in C' \setminus \{\opp\}}N(\opp,c)>\hat{s}=N(p,\hat{c}) \geq \MM_{\E_{C'}}(p)$, and therefore $p$ is not a winner in~${\E_{C'}}$.

To estimate the computational complexity of the algorithm we first note that there are at most $|C|^2$ possible guesses for candidates~$\opp$ and~$\hat{c}$. To compute the scores $N_{\E}(c,c')$ for each candidate pair $c$ and $c'$ in $C$ requires $O(|C|^2\cdot |V|)$ steps. To check the condition on line~\ref{line:Mx-nominees} for some fixed guess takes $O(|C|)$ time.
Hence, the total running time of \ref{alg:Maximin} is $O(|C|^2(|C|+|V|))$.
\end{proof}

Finally, we show that, unlike for Copeland$^\alpha$ and Maximin, \NPP{} is computationally hard for Ranked Pairs, even for a constant number of voters.
\ifshort
The reduction showing $\coNP$-hardness in
Theorem~\ref{thm:RP_coNPcomplete_W1hard} is from \textsc{$3$-SAT}, while 
$\mathsf{W}[1]$-hardness is obtained by a reduction from the %
\textsc{Multicolored Clique}~\cite{pietrzak-multicolored-2003} problem.
\else
Theorem~\ref{thm:RPcoNP} relies on a reduction from \textsc{$3$-SAT}, while the proof of Theorem~\ref{thm:RP_W1hard} gives a parameterized reduction from the $\mathsf{W}[1]$-hard \textsc{Multicolored Clique} problem~\cite{pietrzak-multicolored-2003}.
\fi

\begin{theorem}
\label{thm:RPcoNP}
    \NPP\ for Ranked Pairs is $\coNP$-complete even if the maximum party size is $\maxsize=2$ and the number of voters is $|V|=12$.
\end{theorem}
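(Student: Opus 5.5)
Membership in $\coNP$ follows from Observation~\ref{obs:incoNP}, since Ranked Pairs winners can be computed in polynomial time under any fixed tie-breaking. For hardness, we reduce from \textsc{$3$-SAT}. Given $\varphi=C_1\wedge\dots\wedge C_m$ over $x_1,\dots,x_n$, we build an instance in which $p$ is a necessary president if and only if $\varphi$ is unsatisfiable.

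The instance has the following parties. We use singleton parties $\{p\}$ and $\{\opp\}$ and a two-element party $X_i=\{x_i,\ol{x}_i\}$ for each variable. For each clause $C_j$ we add a singleton candidate $c_j$ and possibly a few gadget singletons. The intended behavior is that $p$ loses only if every clause is ``satisfied.'' Concretely, let $\opp$ defeat $p$ with a weight smaller than every other arc in the construction. Then $(\opp,p)$ is locked in unless a heavier path from $p$ to $\opp$ has already been locked. We make $p\to c_j$ heavy for each $j$, and we let $c_j\to\opp$ be a heavy arc.

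A nominated literal $y\in C_j$ should break the path through $c_j$. For this we give arcs $\opp\to y$ and $y\to c_j$ with weights larger than $c_j\to\opp$. Once these are locked, the arc $c_j\to\opp$ would close a cycle $c_j\to\opp\to y\to c_j$ and is therefore skipped.

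The intended dichotomy is as follows. If every clause has a nominated true literal, no $c_j\to\opp$ arc survives, so $(\opp,p)$ is locked and $p$ is not a winner. If some clause $C_j$ has no nominated literal, the path $p\to c_j\to\opp$ is locked before $(\opp,p)$, so $(\opp,p)$ is skipped. We must also ensure that no other arc into $p$ exists, for instance by letting $p$ defeat every remaining candidate, so that $p$ wins. Nominations correspond exactly to assignments, since nominating $x_i$ or $\ol{x}_i$ means setting that literal true.

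The weights must be realized with $12$ voters. The plan is to take a fixed number of voters whose preferences are assembled from a few ``layers'' of orders, using the standard McGarvey-style trick with constant-size voter sets. Every literal participates in several clauses, so ties and edge multiplicities must be controlled by ordering clauses and literals cleverly within each vote. Each vote places blocks, such as $p$ above all $c_j$, or literals above the $c_j$ of their own clauses, in orders that are reversed between paired voters. This way only the intended pairwise margins survive, with three or four distinct weight levels, for example $12,10,8,2$.

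The main obstacle is this step: producing all required margins, namely $\opp\to y$ for every literal, $y\to c_j$ exactly when $y\in C_j$, $p$ beating everything except $\opp$, and $c_j\to\opp$, simultaneously with a constant number of voters. Unwanted arcs between literals and non-incident clause candidates must have low weight or point in a harmless direction. I would first try arranging clause candidates and literals along an axis with forward and reverse copies, and interleaving them so that $y$ precedes $c_j$ in all voters of some layer exactly when $y\in C_j$. If a single clause candidate cannot capture incidence, I would split each clause into a chain of three gadget candidates, one per literal, to make the incidence local. Finally, I would verify that arcs between the two literals of a party are irrelevant because they are never simultaneously nominated.
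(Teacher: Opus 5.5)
\textbf{There is a genuine gap: the 12-voter construction, which is the content of the theorem, is missing.}

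Your logical skeleton is sound as a reduction. It uses one candidate $c_j$ per clause, makes $(w,p)$ the lightest arc, and discards $c_j\to w$ because the heavier path $w\to y\to c_j$ was locked first. But the part the statement is actually about, realizing the weights with $|V|=12$, is only announced; you call it ``the main obstacle'' yourself. Your skeleton combined with McGarvey's construction (two voters per arc, polynomially many voters) would already give $\coNP$-hardness for $s=2$. So the constant number of voters is the whole difficulty, and no voter profiles are given.

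Your design also makes that step hard. Because your blocking mechanism depends on the processing order, and the result must hold for every tie-breaking, you need three strictly separated majority levels: $7\le N(w,p)<N(c_j,w)<\min\{N(w,y),N(y,c_j)\}$. This forces $N(y,c_j)\ge 9$ for each of the three literals of $C_j$, while every other literal $y'$ needs $N(y',c_j)\le 7$. Otherwise $w\to y'\to c_j$ would wrongly block $c_j\to w$. The usual local trick places a literal immediately before the block of candidates attached to it and reverses the block order in a partner vote. That attaches $c_j$ to only one literal per voter pair, and the other two literals of $C_j$ are split $1$--$1$ exactly like non-incident literals. Three such pairs therefore give only $4$ versus $3$ out of $6$, a one-voter gap. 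You need a two-voter gap, and the remaining voters must also produce the $p$/$w$ levels.

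The paper avoids this differently. It creates one candidate $C_i^j$ per literal occurrence, attached to the single literal $\ol{y}$, where $y$ is the $j$th literal of $C_i$. Because the sets $A(\ell)$ partition $\C^+$, the block trick with only $z_1,z_2,z'_1,z'_2$ yields $N(\ell,C_i^j)=8$ for attached pairs and $6$ otherwise. Each $C_i^j$ is paired with $C_i^{\neg j}$ in a size-two party, and there is a weight-$8$ chain $p\to C_i^{\neg 1}\to C_i^{\neg 2}\to C_i^{\neg 3}\to w$. Every arc except $(w,p)$ (weight $7$) then lies in an acyclic set, and all other pairs are tied. Hence every heavy arc between nominees is locked regardless of order, and $p$ loses iff the nominated heavy DAG has no $p$--$w$ path. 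That happens exactly when each clause nominates some $C_i^j$ whose attached literal $\ol{y}$ is not nominated, i.e., whose $j$th literal is true. Only two weight levels are needed, which is why $12$ voters suffice.

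Your fallback of splitting clauses into chains points in this direction. Without the size-two clause parties and the acyclicity argument, though, it still depends on order-dependent blocking. It therefore keeps the three-level requirement, and you would still have to exhibit the voters.
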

\begin{proof}
We present a reduction from the $\NP$-complete problem 
\new{\sc(2,2)-E3-SAT.}
Let our input formula be $\varphi = \bigwedge_{C \in \mathcal{C}} C$ over a set~$X$ of variables.
We use the notation $X=\{x_1,\dots, x_r\}$ and 
$\C=\{C_1,\dots, C_q\}$.

To define an instance of \NPP\ for Ranked Pairs, we let the candidate set be 
\[\
\{p,\opp\} \cup \{x,\ol{x}:x \in X \} \cup \{C_i^j, C_i^{\neg j}:C_i \in \C, j \in [3]\}
\]
with $p$ as the distinguished candidate.
We define $P=\{p\}$ and~$\oppParty =\{\opp\}$ as singleton parties, and we also add the parties $P_x=\{x,\ol{x}\}$ for each variable $x \in X$ and three parties $P^j_{C_i}=\{C_i^j,C_i^{\neg j}\}$, $j \in [3]$, for each clause $C_i \in  \C$. 
Intuitively, the nominee of party~$P_x$ corresponds to selecting the truth assignment for variable~$x \in X$, while nominating a candidate $C_i^j$ for some clause~$C_i \in \C$ and $j \in [3]$ corresponds to setting the $j^{\textrm{th}}$ literal in clause~$C_i$ to true.
Note that the maximum party size is indeed $\maxsize=2$.

We define the set of voters as $Y \cup Z$ where 
$Y=\{y_h,y'_h:h \in [4]\}$ and 
$Z=\{z_h,z'_h:h \in [2]\}$, so the number of voters is $n=12$.
To define the preferences of voters', let us define the following sets of candidates: $L=\{x,\ol{x}:x \in X\}$ contains all literals,
and we define $\C^+=\{C_i^j:C_i \in \C, j \in [3]\}$
and $\C^-=\{C_i^{\neg j}:C_i \in \C, j \in [3]\}$.
Additionally, for each literal $\ell \in L$, we let \[
A(\ell)=\{C_i^j:\ol{\ell} \textrm{ is the }j^\textrm{th} \textrm{ literal in $C_i$}\}.\]
Notice that the sets~$A(\ell)$, $\ell \in L$, yield a partitioning of the set $\{C_i^j:j \in [3], C_i \in \C\}$ of candidates.

To define the preferences of the voters, we first need some additional notation. 
We fix an arbitrary ordering over the set of all candidates defined, and for any subset~$S$ of the candidates, we let $\ora{S}$ denote the (strict) preference ordering over~$S$ corresponding to this fixed order, and similarly, we let $\ola{S}$ denote the reverse preference ordering over~$S$. 
Then preferences of voters are as follows:

\begin{align*}
    y_h \text{ for }& h \in [4] : \\
    & p \succ \ora{L}  \succ C_1^{\neg 1} \succ C_1^{\neg 2}  \succ C_2^{\neg 1} \succ C_2^{\neg 2} \succ \shortdots \succ C_q^{\neg 1} \succ C_q^{\neg 2} \\
    & \phantom{p}
    \succ C_1^{\neg 3} \succ C_2^{\neg 3}\succ \shortdots \succ C_q^{\neg 3} \succ  \ora{\C^+} \succ \opp %
    \\
    y'_h \text{ for }& h \in [2] : \\ 
    & \ola{\C^+} \succ \opp \succ C_q^{\neg 2} \succ C_q^{\neg 3}  \succ C_{q-1}^{\neg 2} \succ C_{q-1}^{\neg 3} \succ \shortdots \succ C_1^{\neg 2} \succ C_1^{\neg 3} \\
    & \phantom{\ola{\C^+}}
    \succ p \succ C_q^{\neg 1} \succ C_{q-1}^{\neg 1} \succ\shortdots \succ C_1^{\neg 1} \succ \ola{L}  %
    \\
    y'_h \text{ for }&  h \in \{3,4\}: \\
    & \ola{\C^+} \succ  C_q^{\neg 3} \succ C_{q-1}^{\neg 3} \succ \shortdots \succ C_1^{\neg 3} \succ \opp \succ C_q^{\neg 1} \succ C_q^{\neg 2} \\
    & \phantom{:\ola{\C^+}}
    \succ C_{q-1}^{\neg 1} \succ C_{q-1}^{\neg 2} \succ \shortdots \succ C_1^{\neg 1} \succ C_1^{\neg 2} \succ p \succ \ola{L}   %
    \\
    z_h \text{ for }& h \in [2] : \\ 
    & x_1 \succ A(x_1) \succ  x_2 \succ A(x_2) \succ \shortdots \succ  x_r \succ A(x_r) \succ \ol{x}_1 \succ A(\ol{x}_1) \\
    & \phantom{:x_1} \succ 
     \ol{x}_2 \succ A(\ol{x}_2) \succ \shortdots \succ \ol{x}_r \succ A(\ol{x}_r) \succ \opp \succ p \succ \ora{\C^{-}} 
     \\
     z'_1 : \phantom{fo}& 
     \ola{\C^{-}}  \succ p \succ \opp \\
     & \phantom{:\ola{\C^{-}}} \succ 
     \ol{x}_r \succ A(\ol{x}_r) \succ \ol{x}_{r-1} \succ A(\ol{x}_{r-1})
     \succ \shortdots \succ \ol{x}_1 \succ A(\ol{x}_1) \\
     & \phantom{:\ola{\C^{-}}} \succ 
     x_r \succ A(x_r) \succ  x_{r-1} \succ A(x_{r-1}) \succ \shortdots \succ x_1 \succ A(x_1); 
     \\
     z'_2 : \phantom{fo}&  
     \ola{\C^{-}} \succ \opp \succ p 
     \\ 
     & \phantom{:\ola{\C^{-}}i} \succ 
     \ol{x}_r \succ A(\ol{x}_r) \succ \ol{x}_{r-1} \succ A(\ol{x}_{r-1})
     \succ \shortdots \succ \ol{x}_1 \succ A(\ol{x}_1) \\
     & \phantom{:\ola{\C^{-}}i} \succ 
     x_r \succ A(x_r) \succ  x_{r-1} \succ A(x_{r-1}) \succ \shortdots \succ x_1 \succ A(x_1).
\end{align*}
This finishes the construction of our instance~$I$ of \NPP. We are going to show that $I$ admits a reduced election where $p$ is \emph{not} a winner if and only if $\varphi$ is satisfiable.

Let us start with computing the values $N(c,c')$ for each pair of distinct candidates~$c$ and~$c'$ in~$I$. The following can be observed directly from the preferences of voters:
\begin{itemize}
    \item $N(p,\ell)=10$ for each literal $\ell \in L$;
    \item $N(C_i^j,\opp)=10$ for each clause $C_i \in \C$ and $j \in [3]$;
    \item $N(p,C_i^{\neg 1})=N(C_i^{\neg 1},C_i^{\neg 2})=N(C_i^{\neg 2},C_i^{\neg 3})=N(C_i^{\neg 3},\opp)=8$ for each clause $C_i \in \C$;
    \item $N(\opp,p)=7$;
    \item $N(a,a')=6$ for each pair $(a,a')$ of distinct candidates not listed above and satisfying $(a,a') \notin (L \times \C^+) \cup (\C^+ \times L)$. 
\end{itemize}

Consider now a literal~$\ell \in L$ and  some candidate $C_i^j \in \C^+$. It is clear that exactly four voters in~$Y$ prefer~$\ell$ to~$C_i^j$. Moreover, all four voters in~$Z$ prefer~$\ell$ to~$C_i^j$ if $C_i^j \in A(\ell)$; otherwise, there are only two such voters in~$Z$. Hence, we get that
\begin{equation}
\label{eq:LvsC}
    N(\ell, C_i^j)=\left\{ 
    \begin{array}{ll}
         8, & \textrm{ if $C_i^j \in A(\ell)$,} \\
         6, & \textrm{ if $C_i^j \notin A(\ell)$.}
    \end{array}
    \right.
\end{equation}

Let $F$ denote the set of edges added to the acyclic subgraph~$D$ of the majority graph constructed during the winner determination process for Ranked Pairs in a reduced election~$\E_{C'}$ over some candidate set~$C'$; see Figure~\ref{fig:RP_SAT}. 
Notice that the arc set 
\begin{align*}
    F_0 = & \left(\{p\} \times L) \cup (\C^+ \times \{\opp\} \right) \\
    & \cup \{(p,C_i^{\neg 1}), (C_i^{\neg 1},C_i^{\neg 2}), (C_i^{\neg 2},C_i^{\neg 3}),(C_i^{\neg 3},\opp):C_i \in \C\} 
\end{align*}
is acyclic. In fact, even $F_0 \cup (L \times \C^+)$ is acyclic.
Since---by our observations on the pairwise comparisons between candidates---this set contains all arcs with weight at least~$8$ in the weighted majority graph of the instance, we obtain that 
\[F_0 \cap (C' \times C') \subseteq F \subseteq F_0 \cup (L \times \C^+) \cup \{(\opp,p)\}.\]

\begin{figure}[thb]
\centering
\includegraphics[width=\columnwidth]{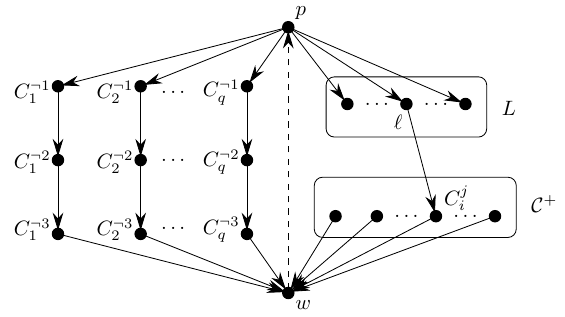}
\caption{Illustration for the acyclic subgraph~$D$ formed during the winner determination procedure of Ranked Pairs for the instance constructed in the proof of Theorem~\ref{thm:RPcoNP}. Solid arcs are always included in~$D$, while  the arc~$(\opp,p)$, shown as dashed arrows, may or may not be present in~$D$. The figure assumes $C_i^j \in A(\ell)$.}
\label{fig:RP_SAT}
\Description{Illustration for the acyclic subgraph formed during the winner determination procedure of Ranked Pairs for the instance constructed in the proof of Theorem~\ref{thm:RPcoNP}.}
\end{figure}

Notice furthermore that the only possible incoming arc for~$p$ is $(\opp,p)$, which implies that $p$ is \emph{not} a winner in the reduced election~$\E_{C'}$ if and only if the arc~$(\opp,p)$ is present in~$F$. We are going to show that this happens if and only if $\varphi$ is satisfiable. 

\medskip
First, assume that $(\opp,p) \in F$. This clearly implies that there can be no path from~$p$ to~$\opp$ in~$D$, so in particular, 
\begin{itemize}
    \item[(a)] for each $C_i \in \C$, there is some $j \in [3]$ for which $C_i^{\neg j} \notin C'$;
    \item[(b)] $F$ contains no arc from $L \times \C^+$. 
\end{itemize}

Note that the nominees in~$\E_{C'}$ naturally determine a truth assignment, since each party~$P_x=\{x,\ol{x}\}$, $x \in X$, has exactly one nominee in~$\E_{C'}$, i.e., $|C' \cap P_x|=1$. Let~$\alpha_{C'}$ denote the truth assignment that sets to true exactly the literals in~$C' \cap L$.

Consider now some clause~$C_i \in \C$.
By (a), there exists some $j \in [3]$ such that the nominee of party~$P_{C_i}^j$ is~$C_i^j$. Let $\ell$ be the $j^{\textrm{th}}$ literal in~$C_i$; then $C_i^j \in A(\ol{\ell})$ by definition. Observe that  $\ol{\ell} \notin C'$, as otherwise the arc $(\ol{\ell},C_i^j)$ has both endpoints in~$C'$ and is thus present in~$F$, contradicting~(b).
Since $\ol{\ell} \notin C'$ implies $\ell \in C'$, we obtain that $\ell$ is set to true in the truth assignment~$\alpha_{C'}$, and thus the clause $C_i$ is satisfied. As this holds for all clauses in~$\C$, we obtain that $\alpha_{C'}$ satisfies~$\varphi$.

\medskip
Second, assume now that $\alpha$ is a satisfying truth assignment for~$\varphi$. 
Consider the reduced election $\E'$ where each party~$P_x$, $x \in X$, nominates the literal that is set to true in~$\alpha$, and each party $P_{C_i}^j$ nominates $C_i^j$ if and only if the $j$-th literal in clause~$C_i$ is set to true in~$\alpha$.

It is straightforward to verify that $p$ is \emph{not} a winner in~$\E'$. To see this, consider the acyclic subgraph~$D$ of the majority graph constructed during the winner determination process for Ranked Pairs in~$\E'$. We need to show that the subgraph of~$D$ that contains only arcs with weight at least~$8$ has no path from~$p$ to~$\opp$, and thus the arc~$(\opp,p)$ gets added to~$D$. 
First, notice that since~$\alpha$ satisfies~$\varphi$, for each clause~$C_i \in \C$ there exists some index~$j$ for which $C_i^j$ is nominated in~$\E'$, and thus $C_i^{\neg j}$ is \emph{not} present in~$\E'$. Thus, there is no path from~$p$ to~$\opp$ in~$D$ going through candidates in~$\C^-$. Second, observe that $D$ contains no arc in~$L \times \C^+$, because if some candidate~$C_i^j$ is a nominee in~$\E'$, then the literal~$\ell$ for which $C_i^j \in A(\ell)$ must be set to false by $\alpha$, and hence is not nominated in~$\E'$.
This shows that $(\opp,p)$ is an arc in~$D$ and hence $p$ is not a winner in~$\E'$.
\end{proof}

We finish with the following result, establishing the intractability of \NPP{} for Ranked Pairs even when the number of parties is a parameter and the number of voters is 20.

\new{In the proof we present a parameterized reduction from the $\mathsf{W}[1]$-hard problem \textsc{Multicolored Clique}~\cite{pietrzak-multicolored-2003}, defined as follows.} %

\medskip
\noindent
\begin{minipage}{\columnwidth}
\begin{framed}
\noindent {\bf Problem }{\sc Multicolored Clique}

\noindent {\bf Instance:} A graph $G=(U,E)$  with its vertex set~$U$ partitioned into $k$ independent sets $U_1, U_2,\dots, U_k$, and an integer $k$.

\noindent {\bf Question:}
Does there exist a \emph{multicolored clique}, that is, a clique of size~$k$ containing a vertex from each set~$U_i$, $i \in [k]$?
    
\end{framed}
\end{minipage}

\begin{theorem}
\label{thm:RP_W1hard}
    \NPP\ for Ranked Pairs is %
    $\mathsf{W}[1]$-hard with respect to parameter~$t$ denoting the number of parties, even if the number of voters is $|V|=20$.
\end{theorem}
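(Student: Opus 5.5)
We give a parameterized reduction from \textsc{Multicolored Clique} that mirrors the Ranked Pairs reduction in the proof of Theorem~\ref{thm:RPcoNP}. The goal is an instance where $p$ is \emph{not} a winner if and only if the graph has a multicolored clique. As before, $p$ and $\opp$ are singleton parties, and the construction ensures that the only possible incoming arc of~$p$ is $(\opp,p)$, of weight strictly below that of all "structural" arcs. Then $p$ loses exactly when the heavier arcs present in the reduced election contain no $p$--$\opp$ path.

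\textbf{Parties and intended paths.} For each color $i\in[k]$ we create a vertex party $P_i=\{u: u\in U_i\}$, and for each pair $i<j$ an edge party $P_{i,j}=\{e_{uv}: uv\in E,\ u\in U_i, v\in U_j\}$. This gives $t=k+\binom{k}{2}+2$ parties, a function of~$k$ only. The heavy arcs are designed so that a $p$--$\opp$ path exists whenever some nominated edge $e_{uv}\in P_{i,j}$ is inconsistent with the nominated vertices of $P_i$ or $P_j$. For this we use a "blocking" pattern analogous to the sets $A(\ell)$: the arcs $p\to u'$ (weight high) for all vertex candidates, $u'\to e_{uv}$ whenever $u'\in U_i$, $u'\neq u$ (weight high), and $e_{uv}\to\opp$ (weight high). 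Any nominated edge candidate not incident to the nominated vertex of its color class then closes a path $p\to u'\to e_{uv}\to\opp$. Consequently, the arc $(\opp,p)$ is locked in if and only if every nominated edge is consistent, i.e., the nominees form a multicolored clique.

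\textbf{Constant number of voters.} This is the main technical step and the expected obstacle: the weight pattern of arcs "$u'\to e_{uv}$ iff $u'\neq u$, same color" must be realized with $20$ voters. As in Theorem~\ref{thm:RPcoNP}, I would use pairs of mutually reversing voters ($\ora{S}$ versus $\ola{S}$) to make every pair default to a tie-level weight ($10$ out of $20$). A few additional voter blocks then raise specific arc families. Two voters of the $z$-type order the candidates as $u_1\succ B(u_1)\succ u_2\succ B(u_2)\succ\cdots$, where $B(u)$ is the set of edge candidates incident to vertices of the same color but not to~$u$. Their partners reverse the blocks internally but keep each $u$ together with $B(u)$. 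This yields a $+2$ boost exactly on the pairs $(u',e)$ with $e\in B(u')$, just as $A(\ell)$ did. The difficulty is that each edge candidate lies in many sets $B(u')$, unlike the partition property of the $A(\ell)$. I would handle this by splitting each color's treatment into separate voter groups, or by duplicating: replacing each $e_{uv}$ with a chain of candidates, one per endpoint, linked by heavy arcs, so that each candidate belongs to only one block. Separate voter groups for "color $i$ side" and "color $j$ side" then keep the total count constant (about $20$). The remaining voters fix $N(p,u)$, $N(e,\opp)$, the chain arcs, and $N(\opp,p)$ slightly above tie, all with weight ordered below the structural arcs.

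\textbf{Correctness.} Both directions follow the proof of Theorem~\ref{thm:RPcoNP}. A clique gives nominations with no heavy $p$--$\opp$ path, so $(\opp,p)\in F$. Conversely, $(\opp,p)\in F$ forces every nominated edge to be consistent with the nominated vertices, which yields a clique. The remaining routine work is verifying that all unintended pairs have weight at most the tie level, so that no stray heavy arcs appear, and that the heavy arc set is acyclic.
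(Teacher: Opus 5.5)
Your reduction matches the paper's at the level of the source problem, the parties ($\{p\}$, $\{\opp\}$, one party per color class and one per pair of classes, so $t=\binom{k}{2}+k+2$), the target majority graph (heavy arcs $p\to u$, $u\to e$ for linked non-incident pairs, $e\to\opp$, a lighter $(\opp,p)$, ties elsewhere) and the correctness argument. The genuine gap is the step you flag yourself: you never realize this weight pattern with a constant number of voters, and neither of your fixes works.

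\textbf{Why your fixes fail.} Splitting by color side does not help. Even restricted to color $i$, an edge $e\in E_{\{i,j\}}$ with endpoint $u\in U_i$ lies in $B(u')$ for every $u'\in U_i\setminus\{u\}$, so the blocks $u'\succ B(u')$ still cannot be concatenated into one linear order. Duplicating edges ``one copy per endpoint'' has the same defect, because the copy attached to $u$ still needs heavy arcs from all of $U_i\setminus\{u\}$. Moreover, all copies must be nominated. Inside one party that is impossible. With separate parties you need a consistency gadget encoding ``different edges'', which is again a non-functional relation and hits the same obstacle. Without a voter bound, a McGarvey-type construction would realize the arc pattern, but you do not invoke one, and it gives polynomially many voters rather than $|V|=20$. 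You yourself only reach ``about $20$''.

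\textbf{The missing idea.} Invert the relation so that the blocks come from a partition. For fixed $i$, the sets $E_{i,>i}(u)$, $u\in U_i$, partition $E_{i,>i}$, since every edge has exactly one endpoint in $U_i$. The paper uses
\[
A_i=\ora{E_{i,>i}(u_i^1)}\succ u_i^1\succ\cdots\succ\ora{E_{i,>i}(u_i^r)}\succ u_i^r,
\]
together with $A'_i$, which lists the same blocks in reverse order, each block still being ``incident edges, then the vertex''. Analogous blocks $B_i,B'_i$ built from $E_{i,<i}$ handle the other endpoint. Thus every edge occurs once among the $A$-blocks and once among the $B$-blocks.

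In such a pair of orders an incident edge precedes its vertex both times, while every non-incident pair is split. Hence the eight $X$-voters give $6$ votes for $e\succ u$ when $e$ is incident to $u$, and $4$ otherwise. The $Y$- and $Z$-voters are arranged so that exactly $8$ of these $12$ prefer $u$ to $e$ for every linked pair, and exactly half do for non-linked pairs. They also yield $N(p,u)=N(e,\opp)=12$ and $N(\opp,p)=11$. Consequently $N(u,e)=12$ exactly for linked non-incident pairs, and incident pairs are tied, with no candidate ever needing to sit in several blocks.

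This inversion is what you need: push incident pairs toward $e\succ u$ and bias the baseline toward $u\succ e$. Without it, the part of the theorem asserting hardness already for $|V|=20$ remains unproved.
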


\begin{proof}

   \new{Let our instance given for \textsc{Multicolored Clique} be a graph $G=(U,E)$, with the vertex set~$U$ partitioned into $k$ sets $U_1, U_2,\dots, U_k$, and an integer $k$.}
   We assume w.l.o.g.\ that each set $U_i$ contains the same number of vertices, say~$r$, so that we may denote the vertices in~$U_i$ as $u_i^1, u_i^2, \dots, u_i^r$ for each $i \in [k]$. 
    
    For two distinct integers~$i$ and~$j$ in~$[k]$, let $E_{\{i,j\}}$ denote the set of edges in~$G$ with one endpoint in~$U_i$ and one endpoint in~$U_j$; we will additionally use the notation $E_{i,>i}=\bigcup_{h:i<h\leq k} E_{\{i,h\}}$
    and $E_{i,<i}=\bigcup_{h:1\leq h<i} E_{\{i,h\}}$.
    Furthermore, for a set $F\subseteq E$ of edges in~$G$ and a vertex~$u \in U$, we let $F(u)$ denote the set of edges incident to~$u$ among those in~$F$. 

    To define an instance of \NPP, let us first define its candidate set as $U \cup E \cup \{p,\opp\}$ where $p$ and~$\opp$ are newly created candidates, with $p$ being the distinguished candidate, contained in party $P=\{p\}$; we also set $\oppParty =\{\opp\}$ as a singleton party. Additionally, for each $i \in [k]$ we set $V_i$ as a party, and similarly, for each $i,j \in [k]$ with $i \neq j$, we set $E_{\{i,j\}}$ as a party. This yields a set~$\mathcal{P}$ of $t=\binom{k}{2}+k+2$ parties.
    We will say that parties $U_i$ and $E_{\{j,\ell\}}$ are \emph{linked} if $i \in \{j,\ell\}$.
    
    We define the set of voters as  $V=X \cup Y \cup Z$ where $X=\{x_h,x'_h:h \in [4]\}$, $Y=\{y_h,y'_h:h \in [4]\}$, and $Z=\{z_h,z'_h:h \in [2]\}$; then the number of voters is $n=20$. 
    Again, as in the proof of Theorem~\ref{thm:RPcoNP}, we fix an arbitrary ordering over the set~$C$ of all candidates defined, and 
    let $\ora{C'}$ and~$\ola{C'}$ denote the preference list over a set~$C' \subseteq C$ of candidates determined by this ordering and its reverse, respectively.
    
    We first define the following partial preference orders which we will use as building blocks in the preference profile we create:
    \begin{align*}
        A_i &:= \ora{E_{i,>i}(u_i^1)} \succ u_i^1 \succ \ora{E_{i,>i}(u_i^2)} \succ u_i^2 
        \succ \shortdots \succ 
        \ora{E_{i,>i}(u_i^r)} \succ u_i^r; \\
        A'_i &:= \ola{E_{i,>i}(u_i^r)} \succ u_i^r \succ \ola{E_{i,>i}(u_i^{r-1})} \succ u_i^{r-1} 
        \succ \shortdots \succ 
        \ola{E_{i,>i}(u_i^1)} \succ u_i^1; \\[4pt]
        B_i &:= \ora{E_{i,<i}(u_i^1)} \succ u_i^1 \succ \ora{E_{i,<i}(u_i^2)} \succ u_i^2 
        \succ \shortdots \succ 
        \ora{E_{i,<i}(u_i^r)} \succ u_i^r; \\
        B   '_i &:= \ola{E_{i,<i}(u_i^r)} \succ u_i^r \succ \ola{E_{i,<i}(u_i^{r-1})} \succ u_i^{r-1} 
        \succ \shortdots \succ 
        \ola{E_{i,<i}(u_i^1)} \succ u_i^1.
    \end{align*}
    Now we are ready to define the preferences of voters as follows:
    \begin{align*}
        x_1 &: p \succ \opp \succ A_1 \succ A_2 \succ \shortdots \succ A_k; \\
        x_2 &: \opp \succ p \succ A_1 \succ A_2 \succ \shortdots \succ A_k; \\
        x'_h &\text{ for } h \in \{1,2\}: %
        A'_k \succ A'_{k-1} \succ \shortdots \succ A'_1 \succ  \opp \succ p %
        \\
        x_h &\text{ for } h \in \{3,4\}:  p \succ \opp \succ B_1 \succ B_2 \succ \shortdots \succ B_k %
        \\
        x'_h &\text{ for } h \in \{3,4\}: B'_k \succ B'_{k-1} \succ \shortdots \succ B'_1 \succ  \opp \succ p %
        \\
        y_h & \text{ for } h \in \{1,2\}: \\
        & p \succ \opp \succ \ora{V_1} \succ\ora{E_{1,>1}} \succ \ora{V_2},\ora{E_{2,>2}} \succ \shortdots \succ \ora{V_{k-1}} \succ \ora{E_{k-1,>k-1}} \succ \ora{V_k} %
        \\
        y'_h & \text{ for } h \in \{1,2\}: \\
        & \!\!\!  \ola{V_k} \succ \ola{V_{k-1}} \succ \ola{E_{k-1,>k-1}} \succ \shortdots \succ \ola{V_2} \succ \ola{E_{2,>2}} \succ \ola{V_1} \succ \ola{E_{1,>1}}  \succ \opp  \succ p %
        \\
        y_h & \text{ for } h \in \{3,4\}: \\
        & p \succ \opp \succ \ora{V_1} \succ \ora{V_2},\ora{E_{2,<2}} \succ \shortdots \succ \ora{V_k} \succ \ora{E_{k,<k}} 
        \\
        y'_h & \text{ for } h \in \{3,4\}: \\
        & \!\!\! \ola{V_k} \succ \ora{E_{k,<k}} \succ \ola{V_{k-1}} \succ \ola{E_{k-1,<k-1}} \succ \shortdots \succ \ola{V_2} \succ \ola{E_{2,<2}} \succ \ola{V_1} \succ \opp \succ p 
        \\
        z_h & \text{ for } h \in \{1,2\}: p \succ \ora{V} \succ \ora{E} \succ  \opp %
        \\
        z'_h & \text{ for } h \in \{1,2\}: \ola{E} \succ  \opp \succ p \succ \ola{V}. 
    \end{align*}

This finishes our construction. The reduction presented is a polynomial-time reduction and also a parameterized one, since the number of parties $t=\binom{k}{2}+k+2$ depends only on the original parameter~$k$. 

\medskip
We claim that there exists a reduced election for the constructed instance of \NPP\ for \new{Ranked Pairs}
where $p$ is \emph{not} a winner 
if and only if $G$ admits a multicolored clique of size~$k$. 
This would prove the correctness of the reduction and, hence, 
the $\mathsf{W}[1]$-hardness of the problem when parameterized by~$t$, even under the condition that $n=20$.

\medskip
\def\sel{\alpha}
Consider an arbitrary reduced election.
We can observe the following facts directly from the preferences of voters: 
\begin{itemize}
    \item $N(p,u)=12$ for each $u \in U$;
    \item $N(e,\opp)=12$ for each $e \in E$;
    \item $N(\opp,p)=11$;
    \item $N(u,u')=10$ for each distinct $u$ and $u'$ in~$U$;
    \item $N(e,e')=10$ for each distinct $e$ and $e'$ in~$E$;    
    \item $N(\opp,u)=10$ for each $u \in U$;
    \item $N(p,e)=10$ for each $e \in E$.    
\end{itemize}

Let us consider now the pairwise comparison between some candidates~$u \in U$ and $e \in E$. 
First assume that $u$ and $e$ belong to parties that are not linked. In this case, they do not simultaneously appear in any of the blocks~$A_i$, $A'_i$, $B_i$, or~$B'_i$ for $i \in [k]$, and thus exactly half of the voters in~$X$ prefer~$u$ to~$e$. Moreover, $u$ is preferred to~$e$ by exactly half of the voters in~$Y$ as well; the same obviously holds  for voters in~$Z$. Hence, $u$ and~$e$ are tied in voters' preferences unless they belong to linked parties.

Assume now $u$ and $e$ belong to linked parties, that is, $u \in U_i$ and $e \in E_{\{i,j\}}$ for two distinct indices~$i,j \in [k]$.
Consider first voters in~$X$. Clearly, $u$ appears in each of~$A_i,A'_i,B_i,$ and~$B'_i$.
However, $e$ appears in~$A_i$ and~$A'_i$ if and only if $i<j$, while it appears in~$B_i$ and~$B'_i$ if and only if $i>j$. 
Moreover, if $i<j$, then $e$ is preferred to~$u$ by all voters in~$\{x_h,x'_h:h \in [2]\}$ if and only if $e \in E_{i,>i}(u)$, otherwise $e$ and~$u$ are tied in these four voters' preferences. 
Similarly, if $i>j$, then $e$ is preferred to~$u$ by all voters in~$\{x_h,x'_h:h \in \{3,4\}\}$ if and only if $e \in E_{i,<i}(u)$, otherwise $e$ and~$u$ are tied in these four voters' preferences. 
Therefore, we obtain that in either case (whether $i<j$ or $i>j$) exactly six voters from~$X$ prefer~$e$ to~$u$ if $e$ is incident to~$u$, otherwise they are tied in the preferences of voters in~$X$.

Consider now voters in~$Y$.
On the one hand, if $i<j$, then $u$ is preferred to~$e$ by all voters in~$Y$ except for voters~$y'_3$ and~$y'_4$. 
On the other hand, if $i>j$, then $u$ is preferred to~$e$ by all voters in~$Y$ except for voters~$y_1$ and~$y_2$. Hence, in either case there are six voters in~$Y$ preferring~$u$ to~$e$, and there are two more such voters in~$Z$. 

Summing all this up, we get that 
    \[ N(u,e)=\left\{
    \begin{array}{@{}l@{\hspace{4pt}}l@{}}
        12, & \text{if $e$ is not incident to~$u$ but their parties are linked,}\\
        10, & \text{otherwise.} \\
    \end{array} 
    \right.
    \]

\begin{figure}[th]
\centering
\includegraphics[scale=1]{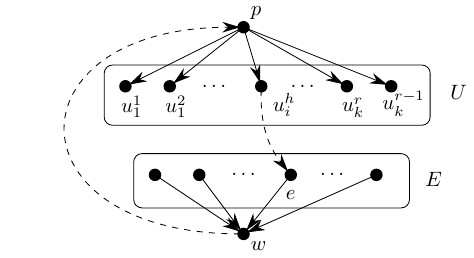}
\caption{Illustration for the acyclic subgraph~$D$ formed during the winner determination procedure of Ranked Pairs for the instance constructed in the proof of Theorem~\ref{thm:RP_W1hard}. Solid arcs are always included in~$D$, while arcs in $U \times E$ and the arc~$(\opp,p)$, shown as dashed arrows, may or may not be present in~$D$.}
\label{fig:RP_clique}
\Description{Illustration for the acyclic subgraph formed during the winner determination procedure of Ranked Pairs for the instance constructed in the proof of Theorem~\ref{thm:RP_W1hard}.}
\end{figure}
Let $F$ denote the set of edges added to the acyclic subgraph $D$ of the majority graph constructed during the winner determination process for Ranked Pairs in a reduced election~$\E_{C'}$ over some candidate set~$C'$; see Figure~\ref{fig:RP_clique}. By our observations on the pairwise comparisons between candidates, we know that $F$ contains all arcs in~$(\{p\} \times U) \cup (E \times \{\opp\})$ whose endpoints are both both candidates nominated in~$\E_{C'}$, i.e., candidates in~$C'$. That is, $F$ contains the arc set
\[F_0=\left((\{p\} \times U) %
    \cup (E \times \{\opp\}) \right) \cap (C' \times C').\]
Additionally, $F$ might also contain some arcs from $U \times E$ or possibly the arc~$(\opp,p)$, so we get
\begin{align*}
    F_0 \subseteq F %
    \subseteq F_0 \cup (U \times E) \cup \{(\opp,p)\}.
\end{align*}

Notice now that any arc~$(u,e) \in F$ creates a path from~$p$ to~$\opp$ in~$D$. 
In fact, there is a path from $p$ to~$\opp$ in~$D$ if and only if $F \cap (U \times E) \neq \emptyset$, which in turn happens if and only if $(\opp,p) \notin F$. 
Notice furthermore that the only possible incoming arc for~$p$ is $(\opp,p)$, which implies that $p$ is \emph{not} a winner in the reduced election~$\E_{C'}$ if and only if the arc $(\opp,p)$ is present in~$F$, or equivalently, if $F \cap (V \times E) = \emptyset$.

It remains to show that $F \cap (V \times E) = \emptyset$ holds if and only if the nominees of all parties except $P$ and~$\oppParty $ form the vertices and edges of a multicolored clique in~$G$, 
which us equivalent with the property that
for each two nominees $e \in E$ and $u \in U$ belonging to linked parties, $e$ is incident to~$u$ in~$G$.

Let $u_i$ denote the nominee of~$U_i$ for each $i \in [k]$.
Note that if the nominee~$e_{\{i,j\}}$ of some party~$E_{\{i,j\}}$ is \emph{not} incident to~$u_i$, then by $N(u_i,e_{\{i,j\}})=12$ the set~$F$ of arcs added when building the digraph~$D$ will contain the arc~$(u_i,e_{\{i,j\}})$; here we rely on the fact that the set $(\{p\} \times U) \cup (E \times \{\opp\}) \cup (U \times E)$ of arcs, containing all candidate pairs~$(c,c')$ with $N(c,c') \geq 12$, is acyclic. 
Conversely, if~$(u,e) \in U \times E$ is added to~$F$, then we must have $N(u,e)=12$ which happens only if $u$ and~$e$ belong to linked parties and $e$ is not incident to~$u$, as required; note
that here we rely on the fact that no arc in $U \times E$ is added to~$F$ after adding $(\opp,p)$, since the addition of~$(\opp,p)$ creates a path from each candidate in~$E$ to each candidate in~$U$, thereby preventing any later addition of some arc in $(U \times E)$ to~$F$.

This proves that it is possible to nominate candidates so that $p$ is not a winner in the resulting election if and only if $G$ admits a multicolored clique.
\end{proof}

\section{Conclusions and Outlook}
We explored the computational complexity of the  \NPP{} problem for several popular voting rules; together with previous results in \cite{cechlarova2023candidates}, \cite{schlotter2024}, and \cite{schlotter2025candidate}, our study offers a detailed picture of the computational tractability of problems faced by parties in the candidate nomination process preceding an election.

A possible direction for future research is 
to extend the existing tractability results for \NPP{} for Plurality under single-peaked~\cite{faliszewski2016} or single-crossing~\cite{misra2019parameterized} preferences to different voting rules or other structured domains.

As an interesting new topic, we propose to study candidate nomination problems for multiwinner elections. Suppose  that the goal of the election is to choose a committee consisting of $k$ members. In this case, it is natural to  assume that a party may nominate more than one candidate. What will its optimal strategy be if it wants to have at least one of its nominees in the committee or if it wishes to maximize the number of its nominees in the committee?

\begin{acks}
Ildik\'o Schlotter is supported by the Hungarian Academy of Sciences under its Momentum Programme (LP2021-2) and its J\'anos Bolyai Research Scholarship.
Katar{\'i}na Cechl{\'a}rov{\'a} is supported sby VEGA 1/0585/24 and
APVV-21-0369.
\end{acks}

\bibliographystyle{ACM-Reference-Format} 
\bibliography{articles}

\end{document}